
\documentclass[reqno]{amsart}

\usepackage{graphicx}
\usepackage{amsfonts}
\usepackage{dsfont} 

\newtheorem{lemma}{Lemma}
\newtheorem{theorem}{Theorem}
\newtheorem{corollary}{Corollary}

\newtheorem{remark}{Remark}

\newcommand{\be}{\begin{eqnarray}}
\newcommand{\ee}{\end{eqnarray}}
\newcommand{\bee}{\begin{eqnarray*}}
\newcommand{\eee}{\end{eqnarray*}}
\newcommand{\R}{{\mathbb R}}

\newcommand{\Z}{{\mathbb Z}}

\newcommand{\W}{{W}}
\newcommand{\Hh}{{H}}
\newcommand{\I}{\mathds 1} 

\newcommand{\asy}{{\mathcal O}}

\newcommand{\da}{d_A}
\newcommand{\Vector}[1]{{\mathbf {#1}}}
\newcommand{\hhbar}{h}
\newcommand{\Chi}{\xi}
\newcommand{\w}{u}

\newcommand{\q}{q}

\begin{document}

\title [Tight-binding approximation for NLS] {Derivation of the tight-binding approximation for time-dependent nonlinear Schr\"odinger equations}

\author {Andrea SACCHETTI}

\address {Department of Physics, Informatics and Mathematics, University of Modena e Reggio Emilia, Modena, Italy.}

\email {andrea.sacchetti@unimore.it}

\date {\today}

\thanks {This paper is partially supported by GNFM-INdAM}

\begin {abstract} 
In this paper we consider the nonlinear one-dimensional time-dependent Schr\"odinger equation with a periodic potential and a bounded 
perturbation. \ In the limit of large periodic potential the time behavior of the wavefunction can be approximated, with a precise estimate 
of the remainder term, by means of the solution to the discrete nonlinear Schr\"odinger equation of the tight-binding model.

\bigskip

{\it Ams classification (MSC 2010):} 35Q55, 81Qxx, 81T25. 

{\it Keywords:} Nonlinearity, PDEs, Tight-binding

\end{abstract}

\maketitle

\section {Introduction} \label {Sec0}

Here we consider the nonlinear one-dimensional time-dependent Schr\"odinger equation with a cubic nonlinearity, a  
periodic potential $V$ and a perturbing potential $W$ 
\be
\left \{
\begin {array}{l}
i \hbar \frac {\partial \psi}{\partial t} = - \frac {\hbar^2}{2m} \frac {\partial^2 \psi}{\partial x^2}  + 
\frac {1}{\epsilon} V \psi + \alpha_1 \W \psi + \alpha_2 |\psi |^{2 } \psi \, ,  \psi (\cdot , t ) \in L^2 (\R )  \\ 
\psi (x,0) = \psi_0 (x) 
\end {array}
\right.  \label {Eq1}
\ee
in the limit of large periodic potential, i.e. $0< \epsilon \ll 1$; $\alpha_1$ represents the 
strength of the perturbing potential $W$ and $\alpha_2$ represents the strength of the nonlinearity term. \ Equation (\ref {Eq1}) is the so called 
Gross-Pitaevskii equation for Bose-Einstein condensates where $\hbar$ is the Planck's constant and $m$ is the mass of the single atom. \ Such a model describes, for instance, 
one-dimensional Bose-Einstein condensates in an optical lattice and under the effect of an external field with potential $\alpha_1 W$; in particular, when such a perturbing 
potential is a Stark-type potential, that is it is locally linear, then recently has been shown the existence of Bloch oscillations for the wavefunction condensate and a precise measurement of 
the gravity acceleration has given \cite {FPST,PWTAPT}.

In the physical literature a standard way to study equation (\ref {Eq1}) consists in reducing it to a discrete Schr\"odinger equation 
taking into account only nearest neighbor interactions, the so called \emph {tight-binding model} \cite {AKKS}. \ The validity of such an approximation is, as 
far as we know, not yet rigorously proved in a general setting.

Recently, it has been proved that (\ref {Eq1}) admits a family of stationary solutions by reducing it to discrete nonlinear Schr\"odinger equations 
\cite {FS2,PSM,Sacchetti}. \ Concerning the reduction of the time-dependent equation to a discrete time-dependent nonlinear Schr\"odinger equation 
much less is known and rigorous results are only given under some conditions: for instance, in \cite {BS} the authors prove the validity of the reduction to discrete 
nonlinear Schr\"odinger equations for large times when $V$ is multiple-well trapped potential; while, 
in \cite {PS} a similar result for periodic potentials $V$ satisfying a sequence of specific technical 
conditions (see Theorem 2.5 \cite {P} for a resume) is obtained. \ We must also recall the papers \cite {P1,P2,P3} where applications of the orbital functions in a 
similar context is developed; in particular, in \cite {P2} the authors prove the validity of the reduction to discrete nonlinear Schr\"odinger equations 
of the Gross-Pitaevskii equation with a periodic linear potential and a sign-varying nonlinearity coefficient. \ In \cite {P3} the authors consider the case of a 
two-dimensional lattice; in particular, they show that tight-binding approximation is justified for simple and honeycomb lattices provided that the initial wavefunction is 
exponentially small.

In this paper we are able to show that the reduction of (\ref {Eq1}) to the time-dependent discrete nonlinear Schr\"odinger equations properly works 
with a precise estimate of the error, and that we don't need of special technical assumptions on the shape of the initial wavefunction and/or on the 
periodic potential; in fact, we have only to assume that the initial wavefunction is prepared on one band of 
the Bloch operator, let us say for argument's sake the first one. 

By introducing the new semiclassical parameter 
\bee
\hhbar = \hbar \sqrt {\epsilon /2m}  \, , 
\eee
the new time variable
\bee
\tau = \frac {\hbar}{h} t 
\eee
and the effective perturbation and nonlinearity strengths
\be
F = \alpha_1 \frac {2m\hhbar^2}{\hbar^2} \ \mbox { and } \  
\eta =\alpha_2 \frac {2m\hhbar^2}{\hbar^2} \, , \label {Eq1Bis}
\ee
then the above equation (\ref {Eq1}) takes the semiclassical form
\be
i \hhbar \frac {\partial \psi}{\partial \tau} = - {\hhbar^2} \frac {\partial^2 \psi}{\partial x^2}  +  V \psi + 
F \W \psi + \eta |\psi |^{2} \psi \label {Eq3}
\ee
with $\hhbar \ll 1$.

In the tight-binding approximation solutions to (\ref {Eq3}) are approximated by solutions to the time-dependent 
discrete nonlinear Schr\"odinger equation 
\be
i \hhbar \dot g_n =  - \beta (g_{n+1} + g_{n-1} ) + F \Chi_n g_n + \eta C_1 |g_n|^2 g_n \, ,\ n \in \Z \, , \label {TB}
\ee
where $\beta \sim e^{-S_0/\hhbar}$ is an \emph {exponentially small} positive constant in the semiclassical limit 
$\hhbar \ll 1$ (in fact, $S_0 >0$ is the Agmon distance between two adjacent wells, and for a precise estimate of the 
coupling parameter $\beta$ 
we refer to (\ref {Eq9Bis})). \ Furthermore, $\Chi_n = \langle \w_n , W \w_n \rangle $ and $C_1 = \| \w_n \|_{L^4}^4$ where, roughly speaking (a precise 
definition for $\w_n$ is given by \cite {Car,FS2,Sacchetti}), $\{ \w_n \}_{n \in \Z}$ is an orthonormal base of vectors of the eigenspace associated to the first 
band of the Bloch operator such that $\w_n \sim \psi_n$ as $\hhbar$ goes to zero; where $\psi_n$ is the ground state with associated energy $\Lambda_1$ of the 
Schr\"odinger equation with a single well potential $V_n$ obtained by filling all the wells, but the $n$-th one, of the periodic potential 
$V$:  $- \hhbar^2 \frac {\partial^2 \psi_n}{\partial x^2} + V_n \psi_n = \Lambda_1 \psi_n$. \ In fact, the linear operator 
$- \hhbar^2 \frac {\partial^2 }{\partial x^2} + V_n $ has a single well potential and thus it has a not empty discrete spectum, we denote by $\Lambda_1$ the 
first eigenvalue (which is independent on the index $n$ by construction).

We must underline that usually the tight-binding approximation is constructed by making use of the Wannier's functions instead of the vectors $\w_n$ 
\cite {AKKS,P}. \ Indeed, the decomposition by means of the Wannier's functions turns out to be more natural and it works for any range of $\hhbar$; on the other 
hand, the use of a suitable base $\{ \w_n \}_{n\in \Z}$ in the semiclassical regime of $\hhbar \ll 1$ has the great advantage that the vectors $\w_n$ are 
explicitly constructed by means of the semiclassical approximation. \ In fact, Wannier's functions may be approximated by such vectors $\w_n$ as pointed out 
by \cite {O}. 

The analysis of the discrete nonlinear Schr\"odinger equations (\ref {TB}) depends on the relative value of the perturbative parameters $F$ and $\eta$ with 
respect to the coupling parameter $\beta$. \ In this paper we consider two situations.

In the first case, named {\it model 1} corresponding to {Hypothesis 3a)}, we assume that $\alpha_1$ and $\alpha_2$ are fixed and independent of $\epsilon$. \ In such a 
case we have that $\beta \ll |F|$ and $\beta \ll C_1 |\eta |$ and then the analysis of (\ref {TB}) is basically reduced to the analysis of a system on infinitely many 
decoupled equations. \ Indeed, the perturbative terms with strength $F$ and $\eta$ dominate the coupling term with strength $\beta$ between the adjacent wells. \ In 
fact, this model has some interesting features; for instance, when $W$ represents a Stark-type perturbation then the analysis of the stationary solutions exhibits 
the existence of a cascade of bifurcations \cite {SacchettiPRE,Sacchetti}. \ On other hand, due to the fact that the perturbation is \emph {large}, when compared with 
the coupling term, the validity of the tight-binding approximation is justified only for time intervals rather small. 

In the second case, named {\it model 2} corresponding to {Hypothesis 3b)}, we assume that both $\alpha_1$ and $\alpha_2$ go to zero when $\epsilon$ goes to 
zero. \ In particular, we assume that 
\bee
F \sim C_1 \eta \sim \beta \, .
\eee
That is the perturbative terms are of the same order of the coupling term. \ In such a case the validity of the 
tight-binding 
approximation holds true for times of the 
order of the inverse of the coupling parameter $\beta$, that is the time interval is exponentially large.

We must remark that one could consider, in principle, other limits for $\alpha_1$ and $\alpha_2$ when $\hhbar$ goes to 
zero and Theorem \ref {Teo4} is very general and it holds true under different assumptions concerning $\alpha_1$ and $\alpha_2$ provided that $F = \asy (\hhbar^2 )$ and $\eta = \asy 
(\hhbar^2 )$. \ In fact, Hyp. 3a) and Hyp. 3b) represents, in some sense, two opposite situations concerning the choice of the parameters.

In \S \ref {Sec2} we state the assumptions on equation (\ref {Eq3}) and we state our main results in Corollaries 1 
and 2, they follow from a more technical 
Theorem \ref {Teo4} we state and prove in \S \ref {Sezione4}. \ In \S \ref {Sec2Bis} we prove a priori estimate of the wavefunction $\psi$ and of its 
gradient $\nabla \psi$. \ In \S \ref {Sec3} we formally construct the discrete nonlinear Schr\"odinger equations; in this Section we make use of some ideas already 
developed by \cite {FS2,Sacchetti} and we refer to these papers as much as possible. \ We must underline that in \cite {FS2,Sacchetti} the estimate of the 
remainder terms is given in the norm $\ell^1$, while in the present paper estimates in the norm $\ell^2$ are necessary and thus most of the material of Section 3, and 
in particular Lemmata 2, 3, 4, 5 and 6, is original and it cannot be simply derived from the papers quoted above. \ In \S \ref {Sezione4} we finally prove the validity 
of the 
tight-binding approximation with a precise estimate of the error in Theorem \ref {Teo4}, the method used is based on an idea already applied by \cite {Sacchetti2005} for a double-well model 
and now applied to a periodic potential; in particular, in \S \ref {Sec4.1} we consider the case where $\alpha_1$ and $\alpha_2$ are fixed, i.e. \emph {model 1}, 
and in \S \ref {Sec4.2} we consider the case where $\alpha_1$ and $\alpha_2$ goes to zero as $\epsilon$ goes to zero in a suitable way,  i.e. \emph {model 2}.

\section {Description of the model and main results} \label {Sec2}

\subsection {Assumptions}

Here, we consider the nonlinear Schr\"odinger equation (\ref {Eq1}) where the following assumptions hold true.

{\noindent {\bf Hypothesis 1.} {\it $V(x)$ is a smooth, real-valued, periodic and non negative function with period $a$, i.e. 
\bee
V(x) = V(x+a) \, , \ \forall x \in \R \, , 
\eee
and with minimum point $x_0 \in \left [ - \frac 12 a ,+\frac 12 a \right )$ such that 
\bee
V(x) > V(x_0) \, , \ \forall x \in \left [ - \frac 12 a ,+\frac 12 a \right ) \setminus \{ x_0 \} \, . 
\eee
For argument's sake we assume that $V(x_0 )=0$ and $x_0=0$.}}

\begin {remark}
We could, in principle, adapt our treatment to a more general case where $V(x)$ has more than one absolute minimum point in the interval 
$\left [ - \frac 12 a , + \frac 12 a \right )$. 
\end {remark}

{\noindent {\bf Hypothesis 2.} {\it The perturbation $\W(x)$ is a smooth real-valued function. \ We assume that $W\in L^\infty (\R )$.}}

Concerning the parameters $m$, $\hbar$, $\alpha_1$, $\alpha_2$ and $\epsilon$ we make the following assumption

{\noindent {\bf Hypothesis 3.} {\it We assume the limit of large periodic potential, i.e. $\epsilon$ is a real and positive parameter small enough
\bee
\epsilon \ll 1 \, . 
\eee
Concerning the other parameters we assume that:

\begin {itemize}
\item [\bf a)] The parameters $m$, $\hbar$, $\alpha_1$, $\alpha_2$ are real-valued and independent of $\epsilon$;
\end {itemize}
or
\begin {itemize}
\item [\bf b)] The parameters $m$, $\hbar$ are real-valued and independent of $\epsilon$ while the parameters $\alpha_1$, $\alpha_2$ are real-valued and 
they go to zero as $\epsilon$ goes to zero, in particular we assume that
\bee
\lim_{\epsilon \to 0^+} \frac {F}{\beta} \in \R \setminus \{ 0 \} \ \mbox { and } \ \lim_{\epsilon \to 0^+} \frac {C_1 \eta}{\beta} \in \R \setminus \{ 0 \}\, , 
\eee
where $F$ and $\eta$ are defined in (\ref {Eq1Bis}), and where the parameters $\beta$ and $C_1$ depend on $\epsilon$ (by means of $\hhbar$) and they are 
defined by (\ref {Eq9Bis}) and (\ref {Eq10}).
\end {itemize}
For argument's sake we assume in both cases that $\alpha_1 \ge 0$; hence $F \ge 0$.
}

\begin {remark} \label {RemDue}
In both cases we have that $0\le  F \le C \hhbar^2$ and $|\eta |\le C \hhbar^2$ for some positive constant $C$. \ In the case b), in particular, $F$ and $\eta$ 
are exponentially small when $\hhbar$ goes to zero. 
\end {remark}

Let $H_B$ be the Bloch operator formally defined on $L^2 (\R , dx)$ as
\be
H_B:=  - {\hhbar^2}\frac {d^2}{dx^2} + V \, . \label {Eq5}
\ee 
It is well known that this operator admits self-adjoint extension on the domain $H^2 (\R )$, still denoted by $H_B$, and its spectrum is given by bands: 
\bee
\sigma (H_B) = \cup_{\ell =1}^{\infty} [E^b_\ell , E^t_\ell] \, , \ \mbox { where } \ E^t_\ell \le E^b_{\ell+1} < E^t_{\ell+1} \, . 
\eee
The intervals $(E^t_\ell, E^b_{\ell+1})$ are named gaps; a gap may be empty, that is $E^b_{\ell +1}=
E^t_\ell$, or not. \ It is well known that in the case of one-dimensional crystals all the gaps are empty if, and 
only if, the periodic potential is a constant function. \ Because we assume that the periodic potential is not a 
constant function then one gap, at least, is not empty (for a review of Bloch operator we refer to \cite {RS}). \ In particular, when $\hhbar$ is small enough then the 
following asymptotic behaviors \cite {WK1,WK2}
\be
\frac {1}{C} \hhbar \le E_1^b \le C \hhbar \ \mbox { and } \ \frac {1}{C} \hhbar \le E_2^b - E_1^t \le C \hhbar \label {Eq6}
\ee
hold true for some $C>1$; hence, the first gap between $E_1^t$ and $E_2^b$ is not empty in the semiclassical limit.

Let $\Pi$ the projection operator associated to the first band $[E_1^b, E_1^t ]$ of 
$H_B$  and let $\Pi_\perp =\I - \Pi$. \ Let 
\be
\psi = \psi_1 + \psi_\perp \ \mbox { where } \psi_1 = \Pi \psi \ \mbox { and } \ \psi_\perp = \Pi_\perp 
\psi \, . \label {Eq7}
\ee
We assume that 

{\noindent {\bf Hypothesis 4.} {\it $\Pi_\perp \psi_0 =0$, where $\psi_0 (x) = \psi (x,0)$; that is the wave function $\psi$ is initially prepared on the first band. \ Through 
the paper we assume, for argument's sake, that $\psi_0$ is normalized, i.e. $\| \psi_0 \|_{L^2} =1$.}}

\subsection {Main results}

Here, we state our main results; they are a consequence of a rather technical Theorem \ref {Teo4} we postpone to \S \ref {Sezione4}. \ 
Let $\Vector {g} \in C (\R , \ell^2 (\Z ))$ be the solution to the tight-binding model, that is the discrete nonlinear Schr\"odinger equation (\ref {TB}); 
let $\psi (\tau , x)\in C (\R , H^1 (\R ))$ be the solution to the nonlinear Schr\"odinger equation (\ref {Eq3}) with initial condition 
$\psi_0 (x)= \sum g_n (0) \w_n (x)$.

\begin {corollary} \label {Cor1Bis}
Under the assumption {Hypothesis 3a)} we have that for any $\tau \in [0, C \hhbar^{-1/2} ]$ then 
\bee 
\left \| \psi (\tau , \cdot ) - \sum_{n\in \Z} g_n (\tau) e^{i\Lambda_1 \tau /\hhbar } u_n (\cdot ) \right \|_{L^2} \le C \hhbar^{1/2}
 \, .  
\eee
\end {corollary}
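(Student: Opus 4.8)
The plan is to derive the corollary as an immediate specialization of the technical Theorem \ref{Teo4}, whose hypotheses on $F$ and $\eta$ are (by Remark \ref{RemDue}) satisfied in the regime $F=\asy(\hhbar^2)$, $\eta=\asy(\hhbar^2)$ corresponding to Hypothesis 3a). First I would recall the three-term decomposition of the true solution into (i) its projection $\psi_1=\Pi\psi$ onto the first band, (ii) the transversal part $\psi_\perp=\Pi_\perp\psi$, and (iii) inside the band, the expansion $\psi_1=\sum_n c_n(\tau)\,\w_n$ in the orthonormal basis $\{\w_n\}$. Hypothesis 4 gives $\psi_\perp(0)=0$, and the a priori estimates of \S\ref{Sec2Bis} on $\|\psi\|_{L^2}$ and $\|\nabla\psi\|_{L^2}$ together with the spectral gap (\ref{Eq6}) (of size $\asymp\hhbar$) control the growth of $\psi_\perp$: one expects $\|\psi_\perp(\tau)\|_{L^2}\le C\hhbar^{-1}\cdot(\text{source terms})$, where the source comes from the nonlinearity $\eta|\psi|^2\psi$ and the perturbation $F\W\psi$ mixing bands, both of size $\asy(\hhbar^2)$, so that over times $\tau\le C\hhbar^{-1/2}$ one gets $\|\psi_\perp(\tau)\|\le C\hhbar^{1/2}$.

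The second ingredient is the reduction, carried out formally in \S\ref{Sec3} and made rigorous in Lemmata 2–6, of the exact evolution equation for the coefficients $c_n$ to the tight-binding system (\ref{TB}) for $g_n$, up to an $\ell^2$ remainder. The key point is that projecting (\ref{Eq3}) onto $\w_n$ produces: the hopping term $-\beta(c_{n+1}+c_{n-1})$ from the off-diagonal matrix elements of $H_B$ in the $\{\w_n\}$ basis (exponentially small, $\beta\sim e^{-S_0/\hhbar}$), the diagonal band energy $\Lambda_1$ (absorbed by the gauge factor $e^{i\Lambda_1\tau/\hhbar}$), the linear term $F\Chi_n c_n$ with $\Chi_n=\langle\w_n,W\w_n\rangle$ up to off-diagonal $W$-contributions, and the cubic term $\eta C_1|c_n|^2c_n$ with $C_1=\|\w_n\|_{L^4}^4$ up to cross terms $\w_n^2\bar\w_m$ with $n\ne m$; the latter cross terms are estimated, in $\ell^2$, by the overlap decay of the semiclassical states $\w_n$, and are shown to contribute a remainder of size $\asy(\eta\,\text{(small)})=\asy(\hhbar^2\cdot\text{exp. small})$ to the right-hand side. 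Applying a Gr\"onwall argument to the difference $c_n-g_n$ in $\ell^2$, with forcing of size $\asy(\hhbar^2)$ (plus exponentially small corrections) and using $i\hhbar\dot{}$ on the left (hence a factor $\hhbar^{-1}$ on integration), yields $\|c(\tau)-g(\tau)\|_{\ell^2}\le C\hhbar^{-1}\cdot\hhbar^2\cdot\tau\le C\hhbar^{1/2}$ for $\tau\le C\hhbar^{-1/2}$.

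Finally I would assemble the pieces: since $\{\w_n\}$ is orthonormal,
\[
\left\|\psi(\tau)-\sum_n g_n(\tau)e^{i\Lambda_1\tau/\hhbar}\w_n\right\|_{L^2}
\le \|\psi_\perp(\tau)\|_{L^2}+\left\|\sum_n\big(c_n(\tau)-g_n(\tau)e^{i\Lambda_1\tau/\hhbar}\big)\w_n\right\|_{L^2}
\le \|\psi_\perp(\tau)\|_{L^2}+\|c(\tau)-g(\tau)e^{i\Lambda_1\tau/\hhbar}\|_{\ell^2},
\]
and both terms on the right are $\asy(\hhbar^{1/2})$ by the above. (Here one must also be careful that $\psi$ exists globally in $H^1$, which follows from the a priori estimates of \S\ref{Sec2Bis} and standard well-posedness for (\ref{Eq3}).)

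I expect the main obstacle to be the $\ell^2$-control of the two genuinely PDE-level error sources: the band-mixing term $\psi_\perp$ and the nonlinear cross terms $\sum_{m\ne n}\langle\w_n,|\psi_1|^2\w_m\rangle$ appearing when one projects the cubic nonlinearity onto each $\w_n$. Unlike the $\ell^1$ estimates of \cite{FS2,Sacchetti}, the $\ell^2$ bounds require summing squares of overlaps $\langle\w_n,\w_m\rangle$, which forces a quantitative use of the exponential (Agmon) decay of $\w_n-\psi_n$ and of the tails of $\psi_n$, together with the uniform-in-$\hhbar$ $L^\infty$ and $L^4$ bounds on $\psi_1$ coming from \S\ref{Sec2Bis}; this is precisely the content of Lemmata 2–6, which Theorem \ref{Teo4} packages, so the corollary itself is then a short deduction. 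The restriction to the short time scale $\tau\le C\hhbar^{-1/2}$ is the price paid because here the perturbative terms ($\sim\hhbar^2$) dominate the exponentially small coupling $\beta$, so no cancellation improves the naive Gr\"onwall rate.
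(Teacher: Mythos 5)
Your proposal follows essentially the paper's own route: the corollary is obtained as a direct specialization of Theorem \ref{Teo4} (i.e.\ of the Duhamel-plus-spectral-gap bound on $\psi_\perp$ in Theorem \ref{Teo3} together with the $\ell^2$ Gronwall bound on $\Vector{c}-\Vector{g}$ in Lemma \ref{Lem10}), inserting $F\sim\eta\sim\hhbar^2$, hence $\lambda\sim\hhbar^{1/2}$, and restricting to $\tau\le C\hhbar^{-1/2}$ so that both contributions and the Gronwall exponentials stay of order $\hhbar^{1/2}$ and $O(1)$ respectively. The only loose point is the intermediate heuristic $\|\psi_\perp\|_{L^2}\le C\hhbar^{-1}\cdot O(\hhbar^2)$: the dominant source is the cubic term, whose $L^2$ size is $|\eta|\,\|\psi\|_{L^\infty}^2\sim\hhbar^{3/2}$, and after the resolvent factor $\hhbar^{-1}$ (the integration by parts exploiting the gap) this gives exactly the $\hhbar^{1/2}$ you assert, so the conclusion is unaffected.
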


\begin {corollary} \label {Cor2}
Under the assumption {Hypothesis 3b)} we have that for any $\tau \in [0, C \beta^{-1} \hhbar ]$, 
where $\beta^{-1}$ is exponentially large as $\hhbar$ goes to zero, then  
\bee 
\left \| \psi (\tau , \cdot ) - \sum_{n\in \Z} g_n (\tau) e^{i\Lambda_1 \tau /\hhbar } u_n (\cdot ) \right \|_{L^2} \le C e^{-\zeta / \hhbar}
\eee
for some $C>0$ and $\zeta >0$ independent of $\hhbar$.
\end {corollary}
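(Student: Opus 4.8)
The plan is to derive both corollaries from the general quantitative estimate of Theorem \ref{Teo4}, so I describe how that estimate is obtained and then how it specializes. Write $\psi=\psi_1+\psi_\perp$ as in (\ref{Eq7}); since by Hyp.~4 $\psi_\perp(0)=0$ and $\{\w_n\}_{n\in\Z}$ is an orthonormal basis of $\mathrm{Ran}\,\Pi$, expand the band component as $\psi_1(\tau,x)=e^{i\Lambda_1\tau/\hhbar}\sum_n c_n(\tau)\w_n(x)$ with $c_n(\tau)=e^{-i\Lambda_1\tau/\hhbar}\langle\w_n,\psi(\tau)\rangle$, and set $r_n(\tau)=c_n(\tau)-g_n(\tau)$. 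Then $\|\psi(\tau,\cdot)-\sum_n g_n(\tau)e^{i\Lambda_1\tau/\hhbar}\w_n\|_{L^2}\le \|\Vector r(\tau)\|_{\ell^2}+\|\psi_\perp(\tau)\|_{L^2}$, so it suffices to control these two quantities on the relevant time windows.

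Projecting (\ref{Eq3}) onto $\w_n$ and using the semiclassical construction of $\{\w_n\}$ together with the $\ell^2$ estimates of Lemmata~2--6 of \S\ref{Sec3}, the coefficients $c_n$ solve (\ref{TB}) up to remainders: the hopping term differs from $-\beta(c_{n+1}+c_{n-1})$ by next-nearest-neighbour overlaps that are exponentially smaller than $\beta$; the term $F\langle\w_n,W\w_m\rangle$ has off-diagonal entries that are exponentially small; the nonlinear term $\eta\langle\w_n,|\psi_1|^2\psi_1\rangle$ differs from $\eta C_1|c_n|^2c_n$ by cross integrals $\langle\w_n,\overline{\w_m}\w_k\w_l\rangle$ with $(m,k,l)\ne(n,n,n)$, again exponentially small; and the coupling to $\psi_\perp$ contributes terms of size $\asy((F+\eta)\|\psi_\perp\|_{L^2})$. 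Subtracting (\ref{TB}) yields an $\ell^2$ equation $i\hhbar\dot{\Vector r}=\mathcal L(\tau)\Vector r+\mathcal N(\tau,\Vector r)+\mathcal E(\tau)$ with $\mathcal L$ bounded self-adjoint, $\|\mathcal E(\tau)\|_{\ell^2}$ exponentially small, and $\mathcal N$ Lipschitz with constant $\asy(F+\eta)$ on the ball containing $\Vector c(\tau)$ and $\Vector g(\tau)$; here one uses conservation of $\|\Vector g(\tau)\|_{\ell^2}=\|\Vector g(0)\|_{\ell^2}$ for (\ref{TB}), of $\|\psi(\tau)\|_{L^2}=1$ for (\ref{Eq3}), and the a priori bound on $\|\nabla\psi(\tau)\|_{L^2}$ from \S\ref{Sec2Bis} (via Gagliardo--Nirenberg) to control $\|\psi\|_{L^\infty}$ in the cubic term. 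In parallel, the equation $i\hhbar\dot\psi_\perp=H_B\psi_\perp+F\Pi_\perp W\psi+\eta\Pi_\perp(|\psi|^2\psi)$ with $\psi_\perp(0)=0$ and $H_B\restriction\mathrm{Ran}\,\Pi_\perp\ge E_2^b\ge C^{-1}\hhbar$ (see (\ref{Eq6})) is estimated by Duhamel's formula.

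A Gronwall argument on the coupled pair $(\|\Vector r\|_{\ell^2},\|\psi_\perp\|_{L^2})$ then gives a bound of the form $\|\Vector r(\tau)\|_{\ell^2}+\|\psi_\perp(\tau)\|_{L^2}\le C\,e^{C(F+\eta)\tau/\hhbar}\,\hhbar^{-1}\big(\|\mathcal E\|_{\ell^2}+(F+\eta)\big)\,\tau$, which is (the relevant case of) Theorem~\ref{Teo4}. Under Hyp.~3a), $F+\eta\sim\hhbar^2$, so on $\tau\le C\hhbar^{-1/2}$ the prefactor is $e^{C\hhbar\tau}\le e^{C\hhbar^{1/2}}=\asy(1)$, $\|\mathcal E\|_{\ell^2}$ is negligible against $F+\eta$, and the bound becomes $\asy(\hhbar^{-1}(F+\eta)\tau)=\asy(\hhbar^{1/2})$, which is Corollary~\ref{Cor1Bis}. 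Under Hyp.~3b), $F\sim C_1\eta\sim\beta$ with $\beta\sim e^{-S_0/\hhbar}$, so on $\tau\le C\beta^{-1}\hhbar$ the prefactor is $e^{C\beta\tau/\hhbar}\le e^{C}=\asy(1)$; since $\hhbar^{-1}(F+\eta)\tau\sim\asy(1)$ is not yet small, one must use that the ansatz is built so that $\mathcal E$ and the $\psi_\perp$-source carry extra exponential smallness beyond $\beta$ — the governing overlaps decay like $e^{-S_1/\hhbar}$ with $S_1>S_0$ — so that the accumulated error is $\asy(\beta^{-1}e^{-S_1/\hhbar})\le Ce^{-\zeta/\hhbar}$ with $\zeta=S_1-S_0>0$, i.e. Corollary~\ref{Cor2}.

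The main obstacle is exactly the last point in the exponentially long regime of model 2: controlling $\|\psi_\perp(\tau)\|_{L^2}$ up to $\tau\sim\beta^{-1}\hhbar$ while keeping it exponentially small, since the naive Duhamel bound only gives $\asy(\hhbar^{-1}(F+\eta)\tau)=\asy(1)$. Here one must exploit the oscillation of $e^{iH_B(\tau-s)/\hhbar}\Pi_\perp$ against the slow phase $e^{i\Lambda_1 s/\hhbar}$ of the band component — an integration-by-parts/oscillatory-integral argument in the spirit of \cite{Sacchetti2005}, using the gap bound $E_2^b-\Lambda_1\ge C^{-1}\hhbar$ from (\ref{Eq6}) — and likewise derive all remainder terms entering $\mathcal E$ in the $\ell^2$ norm (not merely $\ell^1$ as in \cite{FS2,Sacchetti}) with explicit exponential rates, keeping track of which exponent ($S_0$ versus $S_1$) governs each term; this is precisely what the new Lemmata~2--6 of \S\ref{Sec3} and the a priori $H^1$-estimate of \S\ref{Sec2Bis} are for. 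Once these inputs are secured, the Gronwall/Duhamel bookkeeping and the specialization to the two parameter regimes are routine.
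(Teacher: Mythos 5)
Your proposal follows essentially the same route as the paper's: the decomposition into $\psi_\perp$ plus the band component expanded over $\{u_n\}$, $\ell^2$ estimates of the remainders, a Duhamel representation for $\psi_\perp$ with an integration by parts against the spectral gap $E_2^b-\Lambda_1\sim\hhbar$, and Gronwall on both components, finally specialized to $F\sim C_1\eta\sim\beta$ on $\tau\le C\beta^{-1}\hhbar$ where all exponential Gronwall factors are $\asy(1)$. One correction to your bookkeeping in the model-2 paragraph: the $\psi_\perp$-source $U_1$ is \emph{not} exponentially smaller than $\beta$ (its $L^2$ norm is of size $\lambda\sim\hhbar^{-1}\beta$), and the exponential smallness of $\psi_\perp$ comes entirely from the integration by parts you correctly flag afterwards — an $\asy(\lambda)$ boundary term plus a time integral of $\dot U_1=\asy(\lambda\,\hhbar^{-1}\beta)$, giving $\|\psi_\perp\|_{L^2}\le C\hhbar^{-1}\beta$ over the whole window — whereas the ``$S_1>S_0$'' mechanism you invoke is the right explanation only for the remainder $\Vector{r}$ (the term $a\tau/\hhbar\le Ce^{-\zeta/\hhbar}$ with $a$ exponentially smaller than $\beta$).
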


\begin {remark}
In \cite {PS} the estimate of the error was given in the energy norm, and even in \cite {Sacchetti} we used the 
$H^1$-norm. \ If one wants to extend the result of Corollary 1 to the $H^1$-norm it is clear that one has to pay a price; 
indeed, in the proof of Theorem 4 the term $\| u_0 \|_{H^1} \sim \hhbar^{-1/2}$ would appear instead of the term 
$\| u_0 \|_{L^2} =1$ and therefore the estimate of the error becames meaningless. \ On the other hand, this argument is not 
critical in the case of the extension of Corollary 2 to the $H^1$-norm because the term $\| u_0 \|_{H^1}$ is controlled by means 
of the exponentially small term $e^{-\zeta /\hhbar}$. \ In fact, we expect that Corollary 2 still hold true with the $H^1$-norm even 
if we don't dwell here with the detailed proof.
\end {remark}

\subsection {Notation and some functional inequalities} Hereafter, we denote by \linebreak $\| \cdot \|_{L^p}$, 
$p\in [+1,+\infty ]$, the usual norm of the Banach space 
$L^p (\R , dx)$; we denote by $\| \cdot \|_{\ell^p}$, $p\in [+1,+\infty ]$, the usual norm of the Banach space $\ell^p (\Z )$. 

Hereafter, we omit the dependence on $\tau$ in the wavefunctions $\psi$ and in the vectors $\Vector {c}$ when this fact does not cause misunderstanding.

By $C$ we denote a generic positive constant independent of $\hhbar$ whose value may change from line to line.

If $f$ and $g$ are two given quantities depending on the semiclassical parameter $\hhbar$, then by $f \sim g$ we mean that 
\bee
\lim_{\hhbar \to 0^+} \frac {f}{g} \in \R \setminus \{ 0 \} \, . 
\eee

Furthermore, we recall some well known results for reader's convenience:

\begin {itemize}
  
\item [-] {\bf One-dimensional Gagliardo-Nirenberg inequality} by 
\S B.5 \cite {P}: 
\bee
\| f \|_{L^p} \le C \| \nabla f \|_{L^2}^\delta \| f \|_{L^2}^{1-\delta} \, ,\ \delta = \frac 12 - \frac 1p = \frac {p-2}{2p} \, , \ p \in [2,+\infty ] \, ,
\eee

\item [-] {\bf Gronwall's Lemma} by Theorem 1.3.1 \cite {PA}: let $u(\tau )$ be a non negative and continuous 
function such that 
\bee
u(\tau ) \le \alpha (\tau ) + \int_0^{\tau} \delta (\q ) u(\q ) d\q  \, , \ \forall \tau \ge 0 \, , 
\eee
where $\alpha (\tau )$ and $\delta (\tau )$ are non negative and monotone not decreasing functions, then
\bee
u(\tau ) \le \alpha (\tau ) e^{\int_0^{\tau} \delta (\q ) d\q }\, , \forall \tau \ge 0 \, . 
\eee

\item [-] {\bf Agmon distance}: let $E$ be a given energy and $V(x)$ be a potential function, let $[z]_+ = z$ if $z\ge 0$ and $[z]_+ = 0$ if $z<0$; then the Agmon distance 
$\da (x,y)$ between two points $x,y \in \R^d$ is induced by the Agmon metric $[V(x) - E ]_+ dx^2$ where $dx^2$ is the standard metric on $L^2 (\R^d )$:
\bee
\da (x,y) = \inf_{\gamma \in {\mathcal C}} \int_0^1 \sqrt {\left [ V(\gamma (t))-E \right ]_+} |\gamma' (t)| dt 
\eee
where ${\mathcal C} $ is the set of piecewise paths $\gamma$ in $\R^d$ connecting $\gamma (0)=x$ and $\gamma (1)=y$ (see \cite {H} for a resume). \ In particular, in dimension $d=1$ 
and for energy $E=V_{min}$ we denote by $S_0=\int_{x_n}^{x_{n+1}} \sqrt {V(x)-V_{min}} dx$ the Agmon distance between the bottoms $x_n$ and $x_{n+1}$ of two adjacent wells; 
by the periodicity of $V(x)$ then $S_0$ does not depend on the index $n$.

\end {itemize}

\section {Preliminary results} \label {Sec2Bis}

We recall here some results by \cite {C1,C2,C3} concerning the solution to the time-dependent nonlinear Schr\"odinger equation 
with initial wavefunction $\psi_0$. \ The linear operator $\Hh $, formally defined as 
\bee
\Hh := H_B + F \W 
\eee
on the Hilbert space $L^2 (\R , dx)$, admits a self-adjoint extension on the domain $H^2 (\R )$, still denoted by $\Hh$. \ In order to discuss the local and global 
existence of solutions to (\ref {Eq3}) we apply Theorem 4.2 by \cite {C3}: if $\psi_0 \in H^1 (\R )$ there is a unique 
solution $\psi \in C([-T,T], H^1 (\R ))$ to (\ref {Eq3}) with initial datum $\psi_0$, such that 
\bee
\psi ,  \psi \frac {\partial (V+FW)}{\partial x}, \frac {\partial \psi}{\partial x} \in L^{8} ([-T,T];L^{4} (\R )) \, , 
\eee
for some $T>0$ depending on $\| \psi_0 \|_{H^1}$. 

In fact (see \cite {C2}), this solution is global in time for any $\eta \in \R$ (because in the case of one-dimensional nonlinear 
Schr\"odinger equations the cubic nonlinearity in sub-critical) and (\ref {Eq3}) 
enjoys the conservation of the mass
\bee
\| \psi (\cdot , \tau )\|_{L^2 } = \| \psi_0 (\cdot )\|_{L^2 }
\eee
and of the energy
\bee
{\mathcal E} \left [ \psi (\cdot , \tau ) \right ] = {\mathcal E} \left [ \psi_0 (\cdot )\right ]
\eee
where
\bee
{\mathcal E} (\psi ) &:=& \langle \Hh \psi , \psi \rangle + \frac {\eta}{2} \| \psi 
\|_{L^{4}}^{4} \\ 
&=& {\hhbar^2} \left \| \frac {\partial \psi }{\partial x}  \right \|_{L^2 }^2 + \langle V \psi , \psi \rangle + 
F \langle \W \psi , \psi \rangle + \frac {\eta}{2} \| \psi \|_{L^{4}}^{4}
\eee

Here, we prove some useful preliminary a priori estimates.

\begin {theorem} \label {Teo1}
The following a priori estimates hold true for any $\tau \in \R$:
\bee
\| \psi \|_{L^2} = \| \psi_0 \|_{L^2} =1 \ \mbox { and } \ \| \nabla \psi \|_{L^2} \le C \hhbar^{-1/2} \, , \\
\| \psi_1 \|_{L^2} \le \| \psi_0 \|_{L^2} =1 \ \mbox { and } \ \| \nabla \psi_1 \|_{L^2} \le C \hhbar^{-1/2}\, ,  \\
\| \psi_\perp \|_{L^2} \le \| \psi_0 \|_{L^2} =1 \ \mbox { and } \ \| \nabla \psi_\perp \|_{L^2} \le C \hhbar^{-1/2} \, ;
\eee
for some positive constant $C$.
\end {theorem}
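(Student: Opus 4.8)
The plan is to derive all three pairs of estimates from the two conservation laws (mass and energy) recalled just above the statement, together with the basic spectral facts about the Bloch operator $H_B$ and the boundedness of $W$. I would first handle the $L^2$-bounds. The identity $\|\psi(\cdot,\tau)\|_{L^2}=\|\psi_0\|_{L^2}=1$ is just the conservation of mass stated above. Since $\Pi$ is an orthogonal projection, $\psi=\psi_1+\psi_\perp$ is an orthogonal decomposition, so $\|\psi_1\|_{L^2}^2+\|\psi_\perp\|_{L^2}^2=\|\psi\|_{L^2}^2=1$, which immediately gives $\|\psi_1\|_{L^2}\le 1$ and $\|\psi_\perp\|_{L^2}\le 1$ for all $\tau$.

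The main work is the gradient bound $\|\nabla\psi\|_{L^2}\le C\hhbar^{-1/2}$. I would start from the conserved energy
\bee
{\mathcal E}(\psi) = \hhbar^2 \left\| \frac{\partial \psi}{\partial x} \right\|_{L^2}^2 + \langle V\psi,\psi\rangle + F\langle W\psi,\psi\rangle + \frac{\eta}{2}\|\psi\|_{L^4}^4 = {\mathcal E}(\psi_0),
\eee
and solve for the kinetic term. Since $V\ge 0$, the term $\langle V\psi,\psi\rangle$ is nonnegative and can be dropped from the lower bound; since $W\in L^\infty$ and $\|\psi\|_{L^2}=1$, one has $|F\langle W\psi,\psi\rangle|\le F\|W\|_{L^\infty}$. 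For the quartic term I would invoke the one-dimensional Gagliardo–Nirenberg inequality recalled in the notation section with $p=4$, $\delta=\frac14$, giving $\|\psi\|_{L^4}^4\le C\|\nabla\psi\|_{L^2}\|\psi\|_{L^2}^3 = C\|\nabla\psi\|_{L^2}$, hence $\frac{|\eta|}{2}\|\psi\|_{L^4}^4\le C|\eta|\,\|\nabla\psi\|_{L^2}$. Putting these together,
\bee
\hhbar^2\|\nabla\psi\|_{L^2}^2 \le {\mathcal E}(\psi_0) + F\|W\|_{L^\infty} + C|\eta|\,\|\nabla\psi\|_{L^2}.
\eee
By Remark \ref{RemDue} we have $0\le F\le C\hhbar^2$ and $|\eta|\le C\hhbar^2$, and by Hypothesis 4 the initial datum lies on the first band, so by (\ref{Eq6}) ${\mathcal E}(\psi_0)\le \langle H_B\psi_0,\psi_0\rangle + F\|W\|_{L^\infty}+\frac{|\eta|}{2}\|\psi_0\|_{L^4}^4 \le E_1^t + C\hhbar^2 + C\hhbar^2\|\nabla\psi_0\|_{L^2}\le C\hhbar$ (one estimates $\|\nabla\psi_0\|_{L^2}$ separately from $\hhbar^2\|\nabla\psi_0\|_{L^2}^2\le \langle H_B\psi_0,\psi_0\rangle\le E_1^t\le C\hhbar$, giving $\|\nabla\psi_0\|_{L^2}\le C\hhbar^{-1/2}$, so the correction $C\hhbar^2\cdot\hhbar^{-1/2}=C\hhbar^{3/2}$ is negligible). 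Thus the right-hand side above is bounded by $C\hhbar + C\hhbar^2\|\nabla\psi\|_{L^2}$. Writing $y=\|\nabla\psi\|_{L^2}$, this is a quadratic inequality $\hhbar^2 y^2 \le C\hhbar + C\hhbar^2 y$, i.e. $y^2 - Cy - C\hhbar^{-1}\le 0$, whose positive root is $O(\hhbar^{-1/2})$; hence $\|\nabla\psi\|_{L^2}\le C\hhbar^{-1/2}$ uniformly in $\tau$.

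Finally, the gradient bounds on $\psi_1$ and $\psi_\perp$ follow because the spectral projections $\Pi,\Pi_\perp$ of $H_B$ commute with $H_B$ and hence, by functional calculus, are bounded on $H^1(\R)$ with norm controlled uniformly; more concretely, $\hhbar^2\|\nabla\psi_1\|_{L^2}^2\le \langle H_B\psi_1,\psi_1\rangle = \langle H_B\Pi\psi,\Pi\psi\rangle\le \langle H_B\psi,\psi\rangle$ using $V\ge0$ and $[\Pi,H_B]=0$ with $0\le \Pi\le \I$, and $\langle H_B\psi,\psi\rangle = \hhbar^2\|\nabla\psi\|_{L^2}^2+\langle V\psi,\psi\rangle$ which we have just shown to be $O(\hhbar)+\langle V\psi,\psi\rangle$; the potential term is controlled in turn by the energy identity since $\langle V\psi,\psi\rangle\le {\mathcal E}(\psi_0)+F\|W\|_{L^\infty}+\frac{|\eta|}{2}\|\psi\|_{L^4}^4\le C\hhbar$ using the already-established $L^4$ bound. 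Hence $\hhbar^2\|\nabla\psi_1\|_{L^2}^2\le C\hhbar$, i.e. $\|\nabla\psi_1\|_{L^2}\le C\hhbar^{-1/2}$, and the same argument with $\Pi_\perp$ in place of $\Pi$ gives $\|\nabla\psi_\perp\|_{L^2}\le C\hhbar^{-1/2}$.

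The step I expect to be the main obstacle is controlling the defocusing/focusing quartic term in the energy with the correct power of $\hhbar$: one must avoid a crude bound that would force $\|\nabla\psi\|_{L^2}$ to appear on both sides with an uncontrolled coefficient, and it is precisely the smallness $|\eta|=O(\hhbar^2)$ from Remark \ref{RemDue} combined with the sub-critical Gagliardo–Nirenberg exponent $\delta=\frac14<1$ that makes the quadratic inequality closeable and yields the stated $\hhbar^{-1/2}$ rate.
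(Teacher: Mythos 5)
Your proposal is correct and follows essentially the same route as the paper: conservation of mass plus orthogonality of the decomposition $\psi=\psi_1+\psi_\perp$ for the $L^2$ bounds, then conservation of energy combined with $V\ge 0$, the bounds $F,|\eta|\le C\hhbar^2$, the Gagliardo--Nirenberg inequality $\|\psi\|_{L^4}^4\le C\|\nabla\psi\|_{L^2}$, and the resulting quadratic inequality for $\|\nabla\psi\|_{L^2}$, and finally the commutation $\Pi H_B=H_B\Pi$ with the nonnegativity of $H_B$ to transfer the bound to $\nabla\psi_1$ and $\nabla\psi_\perp$. The only differences (using $\|W\|_{L^\infty}$ in place of $W_{\min}$, and invoking $E_1^t\le C\hhbar$ where the paper simply asserts $\langle H_B\psi_0,\psi_0\rangle\sim\hhbar$) are cosmetic.
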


\begin {proof}
From the conservation of the norm we have that
\bee
\| \psi_0 \|_{L^2}^2 = \| \psi \|_{L^2}^2 = \| \psi_\perp \|_{L^2}^2 + \| \psi_1 \|_{L^2}^2 \, ; 
\eee
hence 
\bee
\| \psi_\perp \|_{L^2} \le \| \psi_0 \|_{L^2} = 1 \ \mbox { and } \ \| \psi_1 \|_{L^2} \le \| \psi_0 \|_{L^2} = 1 \, . 
\eee

From the conservation of the energy we may obtain a priori estimate of the gradient of the wavefunction. \ Let
\bee
{\mathcal E}({\psi_0}) = \langle H_B\psi_0 , \psi_0 \rangle + F \langle W \psi_0 , \psi_0 \rangle + \frac 12 \eta \| \psi_0 \|_{L^4}^4 \, , 
\eee
where $\langle H_B\psi_0 , \psi_0 \rangle \sim \hhbar $ since $\psi_0$ is restricted to the eigenspace associated to the first band. \ Recalling that $V \ge 0$ 
then we have that
\bee
\hhbar^2 \| \nabla \psi_0 \|_{L^2}^2 \le \langle H_B \psi_0 , \psi_0 \rangle \sim \hhbar \, , 
\eee
which implies $\| \nabla \psi_0 \|_{L^2} \le C \hhbar^{-1/2}$. \ From this fact, using the fact that $W$ is a bounded potential and by the 
Gagliardo-Nirenberg inequality we have that
\bee
\| \psi_0 \|_{L^4}^4 \le C \| \nabla \psi_0 \|_{L^2} \| \psi_0 \|_{L^2}^3 \le C \| \nabla \psi_0 \|_{L^2} \le C \hhbar^{-1/2} \, . 
\eee
Hence, ${\mathcal E} (\psi_0 ) \sim \hhbar$ since $F \le C \hhbar^2$ and $|\eta | \le C \hhbar^2$ (see Remark \ref {RemDue}). \ Thus, the conservation of the energy 
implies the following inequality:
\bee
\hhbar^2 \| \nabla \psi \|_{L^2}^2 
&=& {\mathcal E}({\psi_0}) - \langle V \psi , \psi \rangle - F \langle W \psi , \psi \rangle - \frac 12 \eta \| \psi \|_{L^4}^4 \\ 
&\le & {\mathcal E}({\psi_0}) - V_{min} \| \psi \|_{L^2}^2 - F W_{min} \| \psi \|_{L^2}^2 - \frac 12 \eta \| \psi \|_{L^4}^4 \\ 
&\le & {\mathcal E}({\psi_0}) - F W_{min} - \frac 12 \eta \| \psi \|_{L^4}^4 
\eee
since $V_{min}=0$ and by the conservation of the norm. \ Let us set
\bee
\Lambda = \frac {{\mathcal E}({\psi_0}) - F W_{min} }{\hhbar^2} \ \mbox { and } \ \Gamma = \frac 12 \frac {\eta}{\hhbar^2} = \frac {m\alpha_2}{\hbar^2} \, , 
\eee
then $|\Gamma |\le C$ and $\Lambda \sim \hhbar^{-1}$ as $\hhbar$ goes to zero. \ The previous inequality becomes 
\bee
\| \nabla \psi \|_{L^2}^2 \le |\Lambda | + |\Gamma | \, \| \psi \|_{L^4}^4 \, . 
\eee
Again, the Gagliardo-Nirenberg inequality implies that 
\bee
\| \psi \|_{L^4}^4 \le C \| \nabla \psi \|_{L^2} \| \psi \|_{L^2}^3 = C \| \nabla \psi \|_{L^2}  
\eee
and thus we get
\bee
\| \nabla \psi \|_{L^2}^2 \le |\Lambda | + |\Gamma | C  \| \nabla \psi \|_{L^2}  
\eee
from which it follows that
\bee
\| \nabla \psi \|_{L^2} \le \frac 12 \left [ |\Gamma |C + \sqrt {\Gamma^2 C^2 + 4 |\Lambda |} \right ] \le C |\Lambda |^{1/2} \le C \hhbar^{-1/2}
\eee
for some positive constant $C$. 

Since $\Pi H_B = H_B \Pi$, we have that 
\bee
&& {\mathcal E} (\psi_0) - F \langle W\psi , \psi \rangle - \frac 12 \eta \| \psi \|_{L^4}^4 = \langle H_B \psi , \psi \rangle = 
\langle H_B \psi_1 , \psi_1 \rangle + \langle H_B \psi_\perp , \psi_\perp \rangle  \\ 
&& \ = \hhbar^2 \| \nabla \psi_1 \|_{L^2}^2 + \hhbar^2 \| \nabla \psi_\perp \|_{L^2}^2 
+ \langle V \psi_1 , \psi_1 \rangle + \langle V \psi_\perp , \psi_\perp \rangle \ge \hhbar^2 \| \nabla \psi_1 \|_{L^2}^2
\eee
since $V_{min}\ge 0$. \ Then,
\bee
\hhbar^2 \| \nabla \psi_1 \|_{L^2}^2 \le  
C \hhbar + \frac 12 | \eta |\, \| \psi \|_{L^4}^4 
\le  C \hhbar + C |\eta | \| \nabla \psi \|_{L^2} \le C \hhbar \, ; 
\eee
hence, 
\bee
\| \nabla \psi_1 \|_{L^2} \le C \hhbar^{-1/2} \, . 
\eee
Similarly we get
\bee
\| \nabla \psi_{\perp} \|_{L^2} \le C \hhbar^{-1/2} \, , 
\eee
and thus the proof of the Theorem is so completed.
\end {proof}

\begin {corollary} \label {Cor1}
We have the following estimates:
\bee
\| \psi \|_{L^\infty} \le C \hhbar^{-1/4}\, , \ \| \psi_1 \|_{L^\infty} \le C \hhbar^{-1/4}\, , \ 
\| \psi_\perp \|_{L^\infty} \le C \hhbar^{-1/4}\, , \ \forall \tau \ge 0 \, .
\eee
\end {corollary}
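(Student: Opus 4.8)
The plan is to obtain the three bounds simultaneously from the endpoint case $p=+\infty$ of the one-dimensional Gagliardo-Nirenberg inequality recalled above, feeding in the a priori estimates already established in Theorem \ref{Teo1}.

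First I would specialize that inequality to $p=+\infty$, for which $\delta = \frac 12 - \frac 1p = \frac 12$, obtaining
\bee
\| f \|_{L^\infty} \le C \| \nabla f \|_{L^2}^{1/2} \| f \|_{L^2}^{1/2}
\eee
for every $f \in H^1(\R)$; this is nothing but the one-dimensional Sobolev embedding $H^1(\R)\hookrightarrow L^\infty(\R)$ with its scaling in $\hhbar$ kept explicit.

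Then I would apply this estimate with $f=\psi$, with $f=\psi_1$ and with $f=\psi_\perp$ in turn. By Theorem \ref{Teo1} one has, uniformly in $\tau\in\R$, the bounds $\| \psi \|_{L^2} = 1$, $\| \psi_1 \|_{L^2}\le 1$, $\| \psi_\perp \|_{L^2}\le 1$ together with $\| \nabla\psi \|_{L^2},\ \| \nabla\psi_1 \|_{L^2},\ \| \nabla\psi_\perp \|_{L^2} \le C\hhbar^{-1/2}$. Substituting these into the displayed inequality immediately gives
\bee
\| \psi \|_{L^\infty},\ \| \psi_1 \|_{L^\infty},\ \| \psi_\perp \|_{L^\infty} \le C \left( \hhbar^{-1/2} \right)^{1/2} = C \hhbar^{-1/4}
\eee
for all $\tau\ge 0$, which is exactly the assertion of the corollary.

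There is no genuine obstacle here: the corollary is a one-line consequence of Theorem \ref{Teo1} and the endpoint Gagliardo-Nirenberg estimate. The only point worth stressing is that the entire $\hhbar$-dependence is inherited from the gradient bound $\| \nabla\, \cdot\, \|_{L^2}\le C\hhbar^{-1/2}$, which is precisely where conservation of energy and the fact that $\psi_0$ is supported on the first band (so that $\langle H_B\psi_0,\psi_0\rangle \sim \hhbar$) entered.
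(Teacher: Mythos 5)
Your proposal is correct and follows exactly the same route as the paper: the author also deduces the corollary from the endpoint case $p=+\infty$, $\delta=\frac 12$ of the one-dimensional Gagliardo--Nirenberg inequality combined with the $L^2$ and gradient bounds of Theorem \ref{Teo1}. Your write-up merely spells out the substitution that the paper leaves implicit.
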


\begin {proof}
They immediately follow from the one-dimensional Gagliardo-Nirenberg inequality (where $p=+\infty$ and $\delta = \frac 12$) and from the previous result.
\end {proof}

\section {Construction of the discrete time-dependent nonlinear Schr\"odinger equation} \label {Sec3}

By the Carlsson's construction \cite {Car} resumed and expanded by \cite {FS2} (see also \S 3 \cite {Sacchetti} for a short review of the main results) 
we may write $\psi_1$ by means of a linear combination of a suitable orthonormal base $\{ u_n \}_{n \in \Z}$ of the space $\Pi \left [ L^2 (\R ) \right ]$, that is
\be
\psi_1 (x) = \sum_{n\in \Z} c_n \w_n (x) \, , \label {Eq8}
\ee
where  $u_n \in H^1 (\R )$ and $ \Vector {c} = \{ c_n \}_{n \in \Z} \in \ell^2 (\Z ) $ and where we omit, for simplicity's sake, the dependence on $\tau$ in the 
wavefunctions $\psi$, $\psi_1$, $\psi_\perp$ as well as in the vector $\Vector {c}$.

By inserting (\ref {Eq7}) and (\ref {Eq8}) in equation (\ref {Eq3}) then it takes the form (where $\dot {} = \frac {\partial }{\partial \tau}$) 
\be
\left \{ 
\begin  {array}{lcl}
i \hhbar \dot c_n &=& \langle \w_n , H_B \psi \rangle + F \langle \w_n , \W \psi \rangle + \eta \langle \w_n , 
|\psi |^{2} \psi  \rangle \, , \ n \in \Z \\ 
i\hhbar \dot \psi_\perp &=& \Pi_\perp H_B \psi + F \Pi_\perp \W \psi + \eta \Pi_\perp |\psi |^{2} \psi  
\end  {array}
\right.  \, , \label {Eq8Bis}
\ee
where $\Vector {c} \in \ell^2$ and $\psi_\perp$ are such that for any $\tau \in \R$
\bee
\| \psi_\perp \|_{L^2} \le \| \psi_0 \|_{L^2} =1 \ \mbox { and } \ \sum_{n\in \Z} |c_n|^2 = \| \Vector {c} \|^2_{\ell^2} = \| \psi_1 \|^2_{L^2} 
\le \| \psi_0 \|^2_{L^2} =1 \, . 
\eee

By mean of the gauge choice $\psi (x,\tau ) \to e^{i \Lambda_1 \tau /\hhbar} \psi (x,\tau )$, and then 
$\psi_\perp (x,\tau ) \to e^{i \Lambda_1 \tau /\hhbar} \psi_\perp (x,\tau )$ 
and $c_n (\tau ) \to e^{i \Lambda_1 \tau /\hhbar} c_n (\tau )$, (\ref {Eq8Bis}) takes the form
\be
\left \{ 
\begin  {array}{lcl}
i \hhbar \dot c_n &=& \langle \w_n , H_B \psi \rangle - \Lambda_1 c_n + F \langle \w_n , \W \psi \rangle + \eta \langle \w_n , 
|\psi |^{2} \psi  \rangle \, , \ n \in \Z \\ 
i\hhbar \dot \psi_\perp &=& \Pi_\perp (H_B-\Lambda_1) \psi + F \Pi_\perp \W \psi + \eta \Pi_\perp |\psi |^{2} \psi  
\end  {array}
\right.  \, , \label {Eq9}
\ee
where $\Lambda_1$ is the energy associated to the ground state of the Schr\"odinger operator $- \hhbar^2 \frac {\partial^2}{\partial x^2} + V_n$, with single 
well potential $V_n$ obtained by filling all the wells of the periodic potential $V$, but the $n$-th one; since $V_n (x) = V_m (x-x_n+x_m)$ by construction 
(see \cite {FS2,Sacchetti} for details) then the spectrum of this linear operator is independent on the index $n$ and the eigenvetor $\psi_n$ associated to the 
ground state $\Lambda_1$ is such that $\psi_m (x) = \psi_n (x-x_m+x_n)$ .

We have that
\bee
\langle \w_n , H_B \psi \rangle = \Lambda_1 c_n - \beta (c_{n+1} + c_{n-1}) + r_{1,n} \, , 
\eee
where $\Lambda_1$ and $\beta$ are independent of the index $n$ and $\beta $ is such that for any $0<\rho <S_0$ there is $C:=C_\rho$ such that
\be
\frac {1}{C} e^{-(S_0+ \rho )/\hhbar} < \beta < C e^{-(S_0- \rho )/\hhbar} \, ; \label {Eq9Bis}
\ee
the remainder term $r_{1,n}$ is defined as  
\bee
\ r_{1,n} := \sum_{m\in \Z} \tilde D_{n,m} c_m 
\eee
where $\tilde D_{n,m}$ satisfies Lemma 1 in \cite {Sacchetti}. \ Furthermore, 
\bee
\langle \w_n , \W \psi \rangle =  \Chi_n c_n + r_{2,n} + r_{3,n} \, ,  
\eee
where we set 
\bee
\Chi_n = \langle u_n , W u_n \rangle \, , \ r_{2,n} = \sum_{m\in \Z \ : \ m \not= n} \langle u_n , W u_m \rangle c_m \mbox { and }  
r_{3,n} = \langle u_n, W\psi_\perp \rangle \, . 
\eee
Finally
\bee
\langle \w_n , |\psi|^{2} \psi \rangle = C_1 |c_n|^{2} c_n + r_{4,n} \, , \ C_1 = \| u_n \|_{L^4}^4 \, , 
\eee
where we set
\bee
r_{4,n} = \langle \w_n , |\psi|^{2} \psi \rangle - C_1 |c_n|^{2} c_n 
\eee
and where by Lemma 1.vi \cite {Sacchetti} it follows that 
\be
C_1 = \| \w_n \|_{L^4}^{4} \equiv  \| \w_0 \|_{L^4}^{4} \sim \hhbar^{- {1}/{2}} \mbox { as } \hhbar \mbox { goes to zero.} \label {Eq10}
\ee

Therefore, (\ref {Eq9}) may be written 
\be
\left \{ 
\begin  {array}{lcl}
i \hhbar \dot c_n &=&  - \beta (c_{n+1} + c_{n-1}) + F \Chi_n c_n + \eta C_1 |c_n|^2 c_n + r_n \\ 
i\hhbar \dot \psi_\perp &=& \Pi_\perp (H_B-\Lambda_1 ) \psi + F \Pi_\perp \W \psi + \eta \Pi_\perp |\psi |^{2} \psi  
\end  {array}
\right.  \, , \label {Eq11}
\ee
where we set
\bee
r_n = r_{1,n} + F r_{2,n} + F r_{3,n} + \eta r_{4,n} \, . 
\eee
Tight-binding approximation (\ref {TB}) is obtained by putting $\psi_\perp \equiv 0 $ and by neglecting the coupling term $r_n$ in (\ref {Eq11}).

We have the following estimates.

\begin {lemma}
\bee
\| \Vector {r}_1 \|_{\ell^2 } \le C e^{-(S_0 + \zeta )/\hhbar } \| \Vector {c} \|_{\ell^2 }
\eee
for some positive constants $C$ and $\zeta$ independent of $\hhbar$.

\end {lemma}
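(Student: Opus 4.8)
The plan is to read $\Vector {r}_1=\tilde D\,\Vector {c}$ as the action on $\ell^2(\Z)$ of the infinite matrix $\tilde D=(\tilde D_{n,m})_{n,m\in\Z}$ with
\[
\tilde D_{n,m}=\langle \w_n,H_B\w_m\rangle-\Lambda_1\delta_{n,m}+\beta\left(\delta_{m,n+1}+\delta_{m,n-1}\right),
\]
which indeed yields $r_{1,n}=\sum_m\tilde D_{n,m}c_m$ once one uses that $\langle\w_n,H_B\psi\rangle=\langle\w_n,H_B\psi_1\rangle=\sum_m c_m\langle\w_n,H_B\w_m\rangle$ (recall $\Pi H_B=H_B\Pi$, $\w_n\in\Pi[L^2]$ and $\psi_\perp\in\Pi_\perp[L^2]$, so the $\psi_\perp$ part drops out). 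The estimate then follows from Schur's test applied to $\tilde D$, once its entries are controlled by Lemma 1 of \cite {Sacchetti}.

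First I would restate the relevant content of that lemma in the present notation: there are $C>0$ and $\zeta_0>0$ independent of $\hhbar$, and for each small $\rho>0$ a constant $C_\rho>0$, such that $|\tilde D_{n,m}|\le C\,e^{-(S_0+\zeta_0)/\hhbar}$ for $|n-m|\le 1$ (the point being that $-\beta$ already reproduces the leading semiclassical order of $\langle\w_n,H_B\w_{n\pm1}\rangle$, and that $\langle\w_n,H_B\w_n\rangle-\Lambda_1$ is of still higher order), while $|\tilde D_{n,m}|\le C_\rho\,e^{-(\da(x_n,x_m)+S_0-\rho)/\hhbar}$ for $|n-m|\ge 2$. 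Using the additivity $\da(x_n,x_m)=|n-m|\,S_0$ of the one-dimensional Agmon distance at energy $V_{min}$, and fixing $\zeta\in(0,\zeta_0)$ together with $\rho$ small enough that $\zeta+\rho<S_0$, both bounds are, for $\hhbar$ small, dominated by
\[
|\tilde D_{n,m}|\ \le\ b_{n-m}\,,\qquad b_k:=C\,e^{-(S_0+\zeta)/\hhbar}\,e^{-[\,|k|-1\,]_+S_0/\hhbar}\,,
\]
where $[\,\cdot\,]_+$ denotes the positive part; the sequence $\{b_k\}_{k\in\Z}$ satisfies $\|b\|_{\ell^1}=C\,e^{-(S_0+\zeta)/\hhbar}\sum_{k\in\Z}e^{-[\,|k|-1\,]_+S_0/\hhbar}\le C'\,e^{-(S_0+\zeta)/\hhbar}$, the series being geometric with sum bounded uniformly in $\hhbar$.

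Next I would invoke Schur's test (Cauchy--Schwarz in $m$ with the weights $|\tilde D_{n,m}|$, followed by Fubini; equivalently Young's inequality, since $\tilde D$ is dominated entrywise by convolution with $b$): from $|\tilde D_{n,m}|\le b_{n-m}$,
\[
\|\Vector {r}_1\|_{\ell^2}^2=\sum_n\Big|\sum_m\tilde D_{n,m}c_m\Big|^2\le\sum_n\Big(\sum_m b_{n-m}\Big)\Big(\sum_m b_{n-m}|c_m|^2\Big)=\|b\|_{\ell^1}\sum_m|c_m|^2\sum_n b_{n-m}=\|b\|_{\ell^1}^2\,\|\Vector {c}\|_{\ell^2}^2\,,
\]
so that $\|\Vector {r}_1\|_{\ell^2}\le\|b\|_{\ell^1}\,\|\Vector {c}\|_{\ell^2}\le C\,e^{-(S_0+\zeta)/\hhbar}\,\|\Vector {c}\|_{\ell^2}$, which is the claim (a polynomial-in-$\hhbar$ prefactor in Lemma 1 of \cite {Sacchetti}, if present, is absorbed by a further harmless shrinking of $\zeta$).

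The one point that needs care is exactly this bookkeeping of exponents: one must be sure that a \emph{single} $\zeta>0$, independent of $\hhbar$, can be chosen so that the near-diagonal corrections ($|n-m|\le1$) are $\asy(e^{-(S_0+\zeta)/\hhbar})$ \emph{and} the far entries retain enough additional decay in $|n-m|$ for the row sums to remain summable with a bound of the same exponential order. This is precisely what the separation of non-adjacent wells by Agmon distance $\ge 2S_0$ provides, through the factor $e^{-[\,|k|-1\,]_+S_0/\hhbar}$; everything else is routine.
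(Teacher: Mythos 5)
Your argument is correct, and it is in substance the argument the paper delegates entirely to a citation: the paper's own ``proof'' of this lemma is the single sentence that the estimate directly comes from Lemma 1 of \cite{Sacchetti}, with no computation shown. What you have done is reconstruct what that citation must contain --- entrywise bounds on $\tilde D_{n,m}$ of order $e^{-(S_0+\zeta_0)/\hhbar}$ near the diagonal and with extra Agmon-distance decay $e^{-[\,|n-m|-1\,]_+ S_0/\hhbar}$ off it, followed by Schur's test --- and this is precisely the device the paper itself deploys one lemma later for $\Vector{r}_2$ (via Example 2.3, \S III.2 of \cite{Kato}), so your route is consistent with the author's. The only caveat is that your restatement of Lemma 1 of \cite{Sacchetti} is an educated guess rather than a verified quotation: the near-diagonal claim, namely that $-\beta$ and $\Lambda_1$ already absorb the leading semiclassical order of $\langle \w_n , H_B \w_{n\pm 1}\rangle$ and $\langle \w_n , H_B \w_n\rangle$ so that the residue gains a factor $e^{-\zeta_0/\hhbar}$, rests entirely on that reference; granting those inputs, your bookkeeping of the exponents and the Schur/Young summation are sound.
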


\begin {proof}
Such an estimate directly comes from Lemma 1 by \cite {Sacchetti}.
\end {proof}

\begin {lemma} \label {Lem2}
For any $0 < \rho <S_0$ there is a positive constant $C:=C_\rho$ such that 
\bee 
\| \Vector {r}_2 \|_{\ell^2 } \le C e^{(S_0 - \rho )/\hhbar} \| \Vector {c} \|_{\ell^2 }\, . 
\eee 
\end {lemma}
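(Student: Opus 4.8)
The plan is to estimate the $\ell^2$-norm of the sequence $\Vector{r}_2 = \{ r_{2,n} \}_{n\in\Z}$, where $r_{2,n} = \sum_{m \ne n} \langle u_n, W u_m \rangle c_m$. The key input is that $W \in L^\infty(\R)$, so $|\langle u_n, W u_m \rangle| \le \|W\|_{L^\infty} \|u_n\|_{L^2}\|u_m\|_{L^2} = \|W\|_{L^\infty}$ for all $n,m$, but this crude bound is not summable; the refinement needed is the semiclassical decay of the off-diagonal overlaps. Since $u_n \sim \psi_n$ and $\psi_n$ is (up to exponentially small corrections in $H^1$) the single-well ground state localized near $x_n$, with the standard Agmon exponential decay $|\psi_n(x)| \le C \hhbar^{-1/4} e^{-d_A(x,x_n)/\hhbar}$, one gets that the overlap $|\langle u_n, W u_m\rangle|$ decays like $e^{-S_0|n-m|/\hhbar}$ up to any loss $\rho/\hhbar$; more precisely, for every $0<\rho<S_0$ there is $C_\rho$ with $|\langle u_n, W u_m\rangle| \le C_\rho\, e^{-(S_0-\rho)|n-m|/\hhbar}$ for $n\ne m$. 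This should be quotable from the estimates on $u_n$ in \cite{FS2,Sacchetti} together with boundedness of $W$; if not directly, it follows from the Agmon decay of $\psi_n$ exactly as the bound on $\tilde D_{n,m}$ in Lemma 1 of \cite{Sacchetti}.

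Granting that pointwise-in-indices bound, the argument is a discrete convolution (Young / Schur test) estimate. Write $a_k = C_\rho e^{-(S_0-\rho)|k|/\hhbar}$ for $k\ne 0$ and $a_0 = 0$, so $|r_{2,n}| \le \sum_{m} a_{n-m} |c_m|$, i.e. $|\Vector{r}_2| \le a * |\Vector{c}|$ pointwise. By Young's inequality for convolutions on $\Z$,
\[
\| \Vector{r}_2 \|_{\ell^2} \le \| a \|_{\ell^1} \, \| \Vector{c} \|_{\ell^2}
= \Big( 2 C_\rho \sum_{k\ge 1} e^{-(S_0-\rho)k/\hhbar} \Big) \| \Vector{c} \|_{\ell^2}
= \frac{2 C_\rho\, e^{-(S_0-\rho)/\hhbar}}{1 - e^{-(S_0-\rho)/\hhbar}} \, \| \Vector{c} \|_{\ell^2}.
\]
For $\hhbar$ small the geometric factor $1 - e^{-(S_0-\rho)/\hhbar}$ is bounded below by $\tfrac12$, so $\| \Vector{r}_2 \|_{\ell^2} \le 4 C_\rho\, e^{-(S_0-\rho)/\hhbar} \| \Vector{c}\|_{\ell^2}$, which is the claimed bound after renaming the constant. (Alternatively one invokes the Schur test: $\sup_n \sum_m |\langle u_n, W u_m\rangle| \le \sum_k a_k$ and likewise in the other index, giving the same operator-norm bound on $\ell^2$.)

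The only real obstacle is justifying the exponential off-diagonal decay $|\langle u_n, W u_m\rangle| \le C_\rho e^{-(S_0-\rho)|n-m|/\hhbar}$ uniformly in $n,m$ — in particular that the Agmon-distance exponent accumulates correctly over $|n-m|$ wells, using $d_A(x_n, x_m) = |n-m| S_0$ from the periodicity of $V$, and that passing from $\psi_n$ to the orthonormalized base $u_n$ (Carlsson/Gram–Schmidt construction) costs only a further $e^{-\rho/\hhbar}$-type loss absorbed into $\rho$. This is precisely the content already extracted in \cite{FS2,Sacchetti} for the analogous kernel $\tilde D_{n,m}$ (cf. Lemma 1 of \cite{Sacchetti}), with $W$ bounded playing the harmless role of a multiplier; once that is in hand the convolution estimate above is routine.
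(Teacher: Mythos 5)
Your argument is correct and is essentially the paper's own proof: the paper likewise reduces to the Schur test (Example 2.3, \S III.2 of Kato) applied to the matrix $W_{n,m}=\langle u_n , W u_m\rangle$ (zero on the diagonal), with the row/column sums controlled by the boundedness of $W$ together with the off-diagonal overlap decay from Lemma 1.iv of \cite{Sacchetti} --- exactly the input you identify and then sum as a geometric series via Young's inequality. Note that what you actually establish is the exponentially \emph{small} bound $C e^{-(S_0-\rho)/\hhbar}\|\Vector{c}\|_{\ell^2}$; the positive exponent in the printed statement (and in the paper's expression for $M'$) is evidently a sign slip, since the negative-exponent version is both what the overlap estimates yield and what is actually used later in the proof of Theorem \ref{Teo2}.
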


\begin {proof}
We set
\bee
W_{n,m} = 
\left \{ 
\begin {array}{ll}
\langle u_n , W u_m \rangle & \ \mbox { if } \ n \not= m \\ 
0 & \ \mbox { if } \ n = m
\end {array}
\right. \, ; 
\eee
then $r_{2,n} = \sum_{m\in \Z} W_{n,m} c_m $. \ By Example 2.3 \S III.2 \cite {Kato} it follows that 
\bee 
\| \Vector {r}_2 \|_{\ell^2 } \le \max [M',M''] \| \Vector {c} \|_{\ell^2 }
\eee 
where $M'$ and $M''$ are such that $\sum_{m\in \Z} |W_{n,m} | \le M'$ and $\sum_{n\in \Z} |W_{n,m} | \le M''$ for any $n \in \Z$; then $M' = M''$ because 
$|W_{n,m} | = |W_{m,n} |$. \ Since $W$ is a bounded operator and by Lemma 1.iv \cite {Sacchetti} it immediately follows that $M' = C e^{(S_0 - \rho )/\hhbar}$ 
for any $0 < \rho < S_0 $ and for some positive constant $C:=C_\rho$. \ Hence, Lemma \ref {Lem2} is so proved. \end {proof}

\begin {lemma} \label {Lem3} 
\bee
\| \Vector {r}_3 \|_{\ell^2 } \le C  \| \psi_\perp \|_{L^2} \, . 
\eee
\end {lemma}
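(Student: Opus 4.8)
The plan is to recognize that Lemma~\ref{Lem3} is nothing more than Bessel's inequality for the orthonormal system $\{u_n\}_{n\in\Z}$, combined with the boundedness of $W$. First I would recall that, by construction, $\{u_n\}_{n\in\Z}$ is an orthonormal base of the closed subspace $\Pi[L^2(\R)]$, so that for every $f\in L^2(\R)$ one has the Bessel inequality
\bee
\sum_{n\in\Z} |\langle u_n , f\rangle|^2 = \| \Pi f \|_{L^2}^2 \le \| f \|_{L^2}^2 \, ,
\eee
the equality coming from the fact that $\Pi$ is the orthogonal projection onto $\overline{\mathrm{span}}\{u_n\}$ and $\langle u_n,f\rangle=\langle u_n,\Pi f\rangle$.

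Next I would apply this with $f = W\psi_\perp$, which belongs to $L^2(\R)$ because $W\in L^\infty(\R)$ by Hypothesis~2 and $\psi_\perp\in L^2(\R)$. Since $r_{3,n} = \langle u_n , W\psi_\perp\rangle$ by definition, this gives
\bee
\| \Vector {r}_3 \|_{\ell^2}^2 = \sum_{n\in\Z} |\langle u_n , W\psi_\perp\rangle|^2 \le \| W\psi_\perp \|_{L^2}^2 \le \| W \|_{L^\infty}^2 \, \| \psi_\perp \|_{L^2}^2 \, ,
\eee
whence $\| \Vector {r}_3 \|_{\ell^2} \le \| W\|_{L^\infty}\, \| \psi_\perp \|_{L^2} = C \| \psi_\perp \|_{L^2}$ with $C = \| W\|_{L^\infty}$ independent of $\hhbar$.

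There is essentially no obstacle here: the only facts used are the orthonormality of $\{u_n\}$ (completeness is not even needed, only Bessel's inequality) and the $L^\infty$ bound on the perturbation $W$. The one point worth stating carefully in the write-up is that the constant $C$ is genuinely $\hhbar$-independent, which is immediate since $\| W\|_{L^\infty}$ does not depend on the semiclassical parameter; this is what makes the estimate directly usable in the later $\ell^2$ bounds on the full remainder $\Vector r$.
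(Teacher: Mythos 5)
Your argument is correct and is essentially the paper's own proof: both reduce $\| \Vector{r}_3 \|_{\ell^2}$ to $\| \Pi W \psi_\perp \|_{L^2}$ via Parseval/Bessel for the orthonormal system $\{ u_n \}$ and then use the boundedness of $W$ (the paper phrases this as $\| \Pi W \Pi_\perp \| \le C$, you as $\| W \|_{L^\infty} \| \psi_\perp \|_{L^2}$). Your remark that the constant is $\hhbar$-independent is a worthwhile clarification but not a departure from the paper's route.
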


\begin {proof}
Since $r_{3,n} = \langle u_n, W \psi_\perp \rangle_{L^2}$ where $\{ u_n \}_{n \in \Z}$ is an orthonormal base of the space $\Pi \left [ L^2 (\R ) \right ]$; then, from 
the Parseval's identity it follows that 
\bee
\| \Vector {r}_3 \|_{\ell^2 } = \| \Pi W \Pi_\perp \psi \|_{L^2} = \| \Pi W \Pi_\perp \psi_\perp \|_{L^2} \le \| \Pi W \Pi_\perp \| \, \| \psi_\perp \|_{L^2} \le  
C  \| \psi_\perp \|_{L^2}
\eee
because $\Pi W \Pi_\perp$ is a bounded potential.
\end {proof}


For what concerns the vector $\Vector {r}_4$ let
\bee 
r_{4,n} =\langle \w_n , |\psi|^{2} \psi \rangle - C_1 |c_n|^{2} c_n = A_n + B_n
\eee
where we set 
\bee
A_n = \langle \w_n , |\psi|^{2} \psi \rangle - \langle \w_n , |\psi_1|^{2} \psi_1 \rangle 
\eee
and
\be
B_n = \langle \w_n , |\psi_1|^{2} \psi_1 \rangle - C_1 |c_n|^{2} c_n = \sum_{j,\ell ,m\in \Z}^\star \langle u_n , \bar u_m u_\ell u_j \rangle \bar c_m c_\ell c_j 
\label {Eq12}
\ee
where $\sum_{j,m,\ell \in \Z}^\star$ means that at least one of three indexes $j$, $\ell$ and $m$ is different from the index $n$.

\begin {lemma} \label {Lem4} Let $\Vector {B} = \{ B_n \}_{n\in \Z}$, then for any $0 < \rho < S_0$ there is a positive constant $C$ such that 
\bee
\| \Vector {B} \|_{\ell^2} \le C e^{-(S_0 - \rho )/\hhbar} \, . 
\eee
\end {lemma}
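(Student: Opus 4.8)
The plan is to estimate $\|\Vector{B}\|_{\ell^2}$ by exploiting that each summand in $B_n$ involves at least one Wannier-type function $u_m$, $u_\ell$, or $u_j$ centered on a well different from the $n$-th one, and that the overlap of $u_0$ with a translated copy $u_k$ decays like $e^{-(S_0|k|-\rho)/\hhbar}$ by Lemma 1.iv of \cite{Sacchetti}. First I would split the starred sum according to which subset of the three indices $\{j,\ell,m\}$ coincides with $n$: either exactly one of them differs from $n$, exactly two differ, or all three differ. In every case at least one translate appears, so the inner product $\langle u_n , \bar u_m u_\ell u_j\rangle$ can be bounded — using that the $u_n$ are $L^2$-normalized and concentrated near $x_n$ with semiclassical exponential tails, together with the $L^\infty$ and $L^4$ bounds on the $u_n$ from Lemma 1.vi of \cite{Sacchetti}, which give $\|u_n\|_{L^4}^4 = C_1 \sim \hhbar^{-1/2}$ and hence $\|u_n\|_{L^\infty} \le C\hhbar^{-1/4}$ — by a quantity of the form $C\,e^{-(S_0 d(n;j,\ell,m)-\rho)/\hhbar}$ times a fixed negative power of $\hhbar$, where $d(n;j,\ell,m)$ is at least the distance from $n$ to the nearest of the differing indices.

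Next I would use the trivial bound $|\bar c_m c_\ell c_j| \le \|\Vector{c}\|_{\ell^\infty}^3 \le \|\Vector{c}\|_{\ell^2}^3 \le 1$, since $\|\Vector{c}\|_{\ell^2} = \|\psi_1\|_{L^2} \le 1$ by the a priori estimate recorded after (\ref{Eq8Bis}). Actually a cleaner route is to treat $\Vector{B}$ as the image of the tensor $\bar{\Vector c}\otimes\Vector c\otimes\Vector c$ under a trilinear form with exponentially small off-diagonal kernel: write $B_n = \sum^\star K_{n,m,\ell,j}\bar c_m c_\ell c_j$ with $|K_{n,m,\ell,j}| \le C\hhbar^{-r} e^{-(S_0-\rho)/\hhbar}\,e^{-\kappa(|n-m|+|n-\ell|+|n-j|)/\hhbar}$ for some fixed $r>0$ and $\kappa>0$ (the extra exponential spatial decay coming from the translates being genuinely separated whenever an index differs from $n$). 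Then a discrete Young / Schur-type estimate in $\ell^2(\Z)$ — summing the geometric series $\sum_{k\ne 0} e^{-\kappa|k|/\hhbar}$, which is itself $\asy(e^{-\kappa/\hhbar})$ and in particular bounded — yields $\|\Vector B\|_{\ell^2} \le C\hhbar^{-r} e^{-(S_0-\rho)/\hhbar}\|\Vector c\|_{\ell^2}^3 \le C\hhbar^{-r}e^{-(S_0-\rho)/\hhbar}$. Finally, the polynomial prefactor $\hhbar^{-r}$ is absorbed into the exponential at the cost of shrinking $\rho$: since $\hhbar^{-r} e^{-(S_0-\rho)/\hhbar} \le C_{\rho'} e^{-(S_0-\rho')/\hhbar}$ for any $\rho' > \rho$, and $\rho \in (0,S_0)$ was arbitrary, we obtain the stated bound $\|\Vector B\|_{\ell^2} \le C e^{-(S_0-\rho)/\hhbar}$ after relabeling $\rho'$ as $\rho$.

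The main obstacle I anticipate is the bookkeeping of the combinatorial cases together with keeping the spatial ($\ell^2$-in-$n$) summation and the exponential decay honest simultaneously: one must be careful that when, say, only $m\ne n$ while $\ell=j=n$, the factor $c_\ell c_j = |c_n|^2$ is diagonal and the $n$-summation is controlled by $\|\Vector c\|_{\ell^4}^4 \le \|\Vector c\|_{\ell^2}^4$, whereas when two or three indices are free one needs the full multilinear Schur bound; in each case the surviving overlap integral must be estimated uniformly in $\hhbar$ using the semiclassical localization of the $u_n$'s (for which the precise statement is Lemma 1.iv/1.vi of \cite{Sacchetti}), and the energies $E$ in the Agmon metric must be matched to $\Lambda_1$ so that the relevant Agmon distance is exactly $S_0$. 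Once the kernel estimate with spatial decay is in place, the rest is a routine application of the discrete convolution inequality and the absorption of the polynomial loss into $\rho$.
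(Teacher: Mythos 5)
Your proposal is correct and follows essentially the same route as the paper: split the starred sum according to which indices coincide with $n$, bound each surviving overlap by the exponentially small quantity $\| u_n u_m \|_{L^1} \le C e^{-[(S_0-\rho')|n-m|-\rho'']/\hhbar}$ from Lemma 1.iv of \cite{Sacchetti} together with the $L^\infty$ bounds on the $u_n$, use $\| \Vector{c} \|_{\ell^2} \le 1$ for the coefficient factors, and absorb the polynomial prefactors into the exponential by shrinking $\rho$. The only cosmetic difference is that you invoke a discrete Schur/Young estimate where the paper simply bounds $\| \Vector{B}_i \|_{\ell^2} \le \| \Vector{B}_i \|_{\ell^1}$ and sums the resulting geometric series directly.
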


\begin {proof}
For argument's sake let us assume that $m$ is the index different from the index $n$ in the sum (\ref {Eq12}); then we have to check the term
\bee
 \sum_{m, \ell, j \in \Z ,\, m \not= n} \langle u_m u_n , u_\ell u_j \rangle \bar c_m c_\ell c_j = B_{1,n} + B_{2,n} + B_{3,n}
\eee
where
\bee
B_{1,n} &=& \sum_{j,m,\ell \in \Z}^{\star 1} \langle u_m u_n , u_\ell u_j \rangle \bar c_m c_\ell c_j  := \sum_{m \in \Z \setminus \{ n\}} 
\sum_{\ell \in \Z \setminus \{ m, n\}} \sum_{j \in \Z \setminus \{\ell, m, n\}} \langle u_m u_n , u_\ell u_j \rangle \bar c_m c_\ell c_j  \\
B_{2,n} &=& \sum_{j,m,\ell \in \Z}^{\star 2} \langle u_m u_n , u_\ell^2 \rangle \bar c_m c_\ell^2 := \sum_{m \in \Z \setminus \{ n\}} 
\sum_{\ell \in \Z \setminus \{ m, n\}} \langle u_m u_n , u_\ell^2 \rangle \bar c_m c_\ell^2 \\ 
B_{3,n} &=& \sum_{j,m,\ell \in \Z}^{\star 3} \langle u_m u_n , u_m^2 \rangle \bar c_m c_m^2  := 
\sum_{m \in \Z ,\,  m \not= n} \langle u_n u_m , u_m^2 \rangle \bar c_m c_m^2 
\eee

Let $0 < \rho < S_0$ be fixed; from Lemma 1.iv \cite {Sacchetti} it follows that for any $\rho' ,\rho'' >0$ such that $\rho' + \rho'' < \rho$ then 
there exists a positive constant $C>0$, independent of the indexes $n$ and $m$ and of the semiclassical parameter $\hhbar$, such that
\be
\left \| \w_n \w_m \right \|_{L^1 (\R )} \le C e^{ \left [ (S_0 -\rho ')|m-n| - \rho'' \right ] /\hhbar } \, . \label {Eq13Bis}
\ee
Now, observing that $|c_m| \le 1$ since $\| \Vector {c} \|_{\ell^2} \le 1$, then 
\bee
|B_{3,n}| &\le & \sum_{m \in \Z ,\,  m \not= n}| \langle u_n u_m , u_m^2 \rangle |\, | c_m|^3 \\  
&\le & \sum_{m \in \Z ,\,  m \not= n} \| u_n u_m \|_{L^1} \| u_m \|_{L^\infty}^2  \, | c_m|^2 \\
&\le & \sum_{m \in \Z ,\,  m \not= n} C \hhbar^{-1/2} e^{-[(S_0-\rho') |n-m|-\rho'']/\hhbar} \, | c_m|^2
\eee
where we make use of the estimate (\ref {Eq13Bis}) and where $\rho' , \rho '' >0$ are such that $\rho' + \rho'' < \rho $. \ Hence, 
\bee
\| \Vector {B}_3 \|_{\ell^2} &\le & \| \Vector {B}_3 \|_{\ell^1} \le  \sum_{n,m\in \Z \, , \ m\not= n} C \hhbar^{-1/2} e^{-[(S_0-\rho') |n-m|-\rho'']/\hhbar} | c_m|^2 \\ 
&\le & C e^{-(S_0-\rho) /\hhbar} \sum_{m\in \Z}  | c_m|^2 = C e^{-(S_0-\rho) /\hhbar} \| \Vector {c} \|_{\ell^2}^2 \\
&\le & C e^{-(S_0-\rho) /\hhbar} \, . 
\eee
for some positive constant $C$. \ For what concerns the term $B_{2,n}$ we have that
\bee
|B_{2,n}| &=& \left | \sum_{m, \ell \in \Z }^{\star 2} \langle u_m u_n , u_\ell^2 \rangle \bar c_m c_\ell^2 \right | \le 
\sum_{m, \ell \in \Z }^{\star 2} \langle |u_m | \, |u_n| , |u_\ell|^2 \rangle \, | c_m |\, |c_\ell |^2 \\ 
&\le &\sum_{ \ell \in \Z ,\, \ell \not= n}   \left \langle  |u_n |\sum_{m\in \Z} |u_m | , | u_\ell |^2 \right \rangle  |c_\ell |^2 \\ 
&\le & \max_{\ell \in \Z} \| u_\ell \|_{L^\infty} \left \| \sum_{m\in \Z} |u_m| \right \|_{L^\infty} \sum_{ \ell \in \Z ,\, \ell \not= n}   
\left \langle  |u_n | , | u_\ell | \right \rangle  |c_\ell |^2 \\
&\le & \max_{\ell \in \Z} \| u_\ell \|_{L^\infty} \left \| \sum_{m\in \Z} |u_m| \right \|_{L^\infty} \sum_{ \ell \in \Z ,\, \ell \not= n}   
C e^{-[(S_0-\rho' )|n-\ell |-\rho'']/\hhbar} |c_\ell |^2 \\ 
&\le & C \| u_0 \|_{L^\infty} \left \| \sum_{m\in \Z} |u_m| \right \|_{L^\infty} \sum_{ \ell \in \Z ,\, \ell \not= n}   
e^{-[(S_0-\rho' )|n-\ell |-\rho'']/\hhbar} |c_\ell |^2 \\ 
&\le & C \hhbar^{-3/4} \sum_{ \ell \in \Z ,\, \ell \not= n} e^{-[(S_0-\rho' )|n-\ell |-\rho'']/\hhbar} |c_\ell |^2
\eee 
since $\| u_\ell \|_{L^\infty} = \| u_0 \|_{L^\infty} \le C \hhbar^{-1/4}$ and $\left \| \sum_{m\in \Z} |u_m| 
\right \|_{L^\infty} \le C \hhbar^{-1/2}$ (see Lemma 1 \cite {Sacchetti}), from which it follows that
\bee
\| \Vector {B}_2 \|_{\ell^2} &\le & \| \Vector {B}_2 \|_{\ell^1} = \sum_{n\in \Z} |B_{2,n}| 
\le C \hhbar^{-3/4} \sum_{n,\ell \in \Z \, \ n \not= \ell}  
 Ce^{-[(S_0-\rho' )|n-\ell |-\rho'']/\hhbar} |c_\ell |^2 \\ 
 &\le & C e^{-(S_0-\rho )/\hhbar} \| \Vector {c} \|_{\ell^2} \le C e^{-(S_0-\rho )/\hhbar} \, ,
\eee 
where $\rho' , \rho '' >0$ are such that $\rho' + \rho'' < \rho < S_0$. \ Finally,
\bee
|B_{1,n}| &\le & \sum_{m, \ell, j \in \Z }^{\star 1} \langle |u_m|\, |u_n| , |u_\ell|\, |u_j| \rangle |c_m|\, |c_\ell|\, |c_j| \\ 
&\le & \frac 12 \sum_{m, \ell, j \in \Z }^{\star 1} \langle |u_m|\, |u_n| , |u_\ell|\, |u_j| \rangle |c_m|\, [ |c_\ell|^2 + |c_j|^2 ] \\ 
&\le & \sum_{m, \ell, j \in \Z }^{\star 1} \langle |u_m|\, |u_n| , |u_\ell|\, |u_j| \rangle |c_m| \, |c_j|^2  \\
&\le & \sum_{m \in \Z \setminus \{ n \},  j \in \Z \setminus \{ m,n \}} \left \langle |u_m|\, |u_n| , |u_j|\sum_{\ell \in \Z} |u_\ell| \right \rangle |c_m| \, |c_j|^2  \\ 
\eee
\bee
\ \ \ \ \ \ &\le & \left \| \sum_{\ell \in \Z} |u_\ell| \right \|_{L^\infty} \sum_{m \in \Z \setminus \{ n \},  j \in \Z \setminus \{ m,n \}} 
\langle |u_m|\, |u_n| , |u_j| \rangle |c_m| \, |c_j|^2  \\
&\le & \left \| \sum_{\ell \in \Z} |u_\ell| \right \|_{L^\infty} \sum_{  j \in \Z ,\, j \not= n} \left \langle |u_n| 
\sum_{m\in \Z} |u_m|  , |u_j| \right \rangle  |c_j|^2  \\
&\le & \left \| \sum_{\ell \in \Z} |u_\ell| \right \|_{L^\infty}^2 \sum_{  j \in \Z ,\, j \not= n}  \langle |u_n| , |u_j| \rangle  |c_j|^2  \\
&\le &  C \hhbar^{-1} \sum_{  j \in \Z ,\, j \not= n}  e^{-\left [ (S_0 - \rho') |n-j|- \rho'' \right ] /\hhbar } |c_j|^2  
\eee
since $|c_m |\le 1$. \ Hence, 
\bee
\| \Vector {B}_1 \|_{\ell^2} &\le & \| \Vector {B}_1 \|_{\ell^1} \le  C \hhbar^{-1} \sum_{n,j\in \Z \, , \ j\not= n} C 
e^{-\left [ (S_0 - \rho') |n-j|- \rho'' \right ] /\hhbar } |c_j|^2 \\
&\le & C e^{-(S_0 - \rho)  /\hhbar } \sum_{j\in \Z }|c_j|^2 =  C e^{-(S_0 - \rho)  /\hhbar } \| \Vector {c} \|_{\ell^2} \\
&\le & C e^{-(S_0 - \rho)  /\hhbar }\, .
\eee
From these estimates it follows that 
\bee
\| \Vector {B} \|_{\ell^2} \le C \left [ \| \Vector {B}_1 \|_{\ell^2} + \| \Vector {B}_2 \|_{\ell^2}+ \| \Vector {B}_3 \|_{\ell^2} \right ] \le C e^{-(S_0-\rho)/\hhbar }  
\eee
and Lemma \ref {Lem4} is so proved. 
\end {proof}

Now we deal with the vector $\Vector {A}$ with elements
\bee
A_n = \langle u_n , g \rangle 
\eee
where 
\bee
g := |\psi |^2 \psi - |\psi_1 |^2 \psi_1 = \bar \psi_1 \psi_\perp^2 + 2 |\psi_1|^2 \psi_\perp + \psi_1^2 \bar \psi_\perp + |\psi_\perp|^2 \psi_\perp + 
2 \psi_1 |\psi_\perp|^2 
\eee

\begin {lemma}\label {Lem5}
\bee
\| \Vector {A} \|_{\ell^2} \le C \hhbar^{-1/2} \| \psi_\perp \|_{L^2}
\eee
\end {lemma}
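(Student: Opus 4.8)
The plan is to follow the same route as in the proof of Lemma \ref{Lem3}. Since $\{ u_n \}_{n \in \Z}$ is an orthonormal base of the subspace $\Pi [ L^2 (\R ) ]$ and $A_n = \langle u_n , g \rangle$, Parseval's identity gives
\[
\| \Vector{A} \|_{\ell^2} = \| \Pi g \|_{L^2} \le \| g \|_{L^2} \, ,
\]
the last inequality holding because $\Pi$ is an orthogonal projection. Thus the whole matter reduces to proving that $\| g \|_{L^2} \le C \hhbar^{-1/2} \| \psi_\perp \|_{L^2}$, which in turn will be obtained by estimating separately each of the five terms in the displayed decomposition of $g$ that precedes the statement.

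Each of those five terms is a product of three factors, each factor being $\psi_1$, $\bar\psi_1$, $\psi_\perp$ or $\bar\psi_\perp$, and \emph{every} term contains at least one factor $\psi_\perp$ (or $\bar\psi_\perp$). Accordingly, for each term I would apply H\"older's inequality, keeping one such $\psi_\perp$-factor in the $L^2$ norm and placing the two remaining factors in the $L^\infty$ norm; for instance
\[
\| \psi_1^2 \bar\psi_\perp \|_{L^2} \le \| \psi_1 \|_{L^\infty}^2 \, \| \psi_\perp \|_{L^2} \, , \qquad
\| \bar\psi_1 \psi_\perp^2 \|_{L^2} \le \| \psi_1 \|_{L^\infty} \, \| \psi_\perp \|_{L^\infty} \, \| \psi_\perp \|_{L^2} \, ,
\]
and analogously for the remaining terms $2 |\psi_1|^2 \psi_\perp$, $2 \psi_1 |\psi_\perp|^2$ and $|\psi_\perp|^2 \psi_\perp$.

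It then remains to invoke Corollary \ref{Cor1}, which provides $\| \psi_1 \|_{L^\infty} \le C \hhbar^{-1/4}$ and $\| \psi_\perp \|_{L^\infty} \le C \hhbar^{-1/4}$; hence the product of any two of these $L^\infty$ norms is $\le C \hhbar^{-1/2}$, so that each of the five terms is bounded by $C \hhbar^{-1/2} \| \psi_\perp \|_{L^2}$, and summing the five contributions yields the claim. I do not expect any genuine difficulty here: the argument is merely H\"older's inequality combined with the a priori $L^\infty$ bounds of Corollary \ref{Cor1}; the only point to watch is to always reserve a single $\psi_\perp$-factor for the $L^2$ slot, so that the final estimate is linear in $\| \psi_\perp \|_{L^2}$, as required.
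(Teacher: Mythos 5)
Your proposal is correct and follows essentially the same route as the paper: Parseval's identity to reduce to $\| g \|_{L^2}$, then H\"older's inequality on each term of the decomposition of $g$ keeping one $\psi_\perp$-factor in $L^2$ and the other two factors in $L^\infty$, and finally the bounds $\| \psi_1 \|_{L^\infty}, \| \psi_\perp \|_{L^\infty} \le C \hhbar^{-1/4}$ from Corollary \ref{Cor1}. The only cosmetic difference is that the paper groups the five terms into the three types $\psi_\perp^3$, $\psi_\perp^2 \psi_1$, $\psi_\perp \psi_1^2$ before estimating, which is exactly what your term-by-term treatment amounts to.
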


\begin {proof}
Indeed, since $\{ u_n \}_{n \in \Z}$ is an orthonormal base of $\Pi \left [ L^2 (\R )\right ]$ from the Parseval's identity it follows that
\bee
\| \Vector {A} \|_{\ell^2} &=& \| \Pi g \|_{L^2} \le \| g \|_{L^2} \le C \left [ \| \psi_\perp^3 \|_{L^2} + \| \psi_\perp^2 \psi_1 \|_{L^2} +  
\| \psi_\perp \psi_1^2 \|_{L^2} \right ] \\ 
&\le & C \max \left ( \| \psi_\perp \|_{L^\infty}^2 , \| \psi_1 \|_{L^\infty}^2 \right ) \| \psi_\perp \|_{L^2} \le C \hhbar^{-1/2} \| \psi_\perp \|_{L^2}
\eee
from Corollary \ref {Cor1}.
\end {proof}

In conclusion we have proved the following Lemma;

\begin {lemma} \label {Lem6}
\bee
\| \Vector {r}_4 \|_{\ell^2} \le  C e^{-(S_0 - \rho )/\hhbar} + C \hhbar^{-1/2} \| \psi_\perp \|_{L^2} \, . 
\eee
\end {lemma}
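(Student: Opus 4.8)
The plan is simply to assemble the two preceding lemmata. Recall that just before Lemma~\ref{Lem4} we split the $n$-th component of the cubic remainder as
\bee
r_{4,n} = \langle \w_n , |\psi|^{2} \psi \rangle - C_1 |c_n|^{2} c_n = A_n + B_n \, ,
\eee
with $A_n = \langle \w_n , |\psi|^{2} \psi \rangle - \langle \w_n , |\psi_1|^{2} \psi_1 \rangle$ collecting the contribution coming from the transverse component $\psi_\perp$, and $B_n = \langle \w_n , |\psi_1|^{2} \psi_1 \rangle - C_1 |c_n|^{2} c_n$ collecting the genuinely off-site cubic overlaps of the semiclassical base $\{ \w_n \}$. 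Hence $\Vector{r}_4 = \Vector{A} + \Vector{B}$ as elements of $\ell^2 (\Z )$, and by Minkowski's inequality
\bee
\| \Vector {r}_4 \|_{\ell^2} \le \| \Vector {A} \|_{\ell^2} + \| \Vector {B} \|_{\ell^2} \, .
\eee

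It then suffices to insert the estimates already obtained: Lemma~\ref{Lem5} gives $\| \Vector {A} \|_{\ell^2} \le C \hhbar^{-1/2} \| \psi_\perp \|_{L^2}$, while Lemma~\ref{Lem4} gives $\| \Vector {B} \|_{\ell^2} \le C e^{-(S_0 - \rho )/\hhbar}$ for any fixed $0<\rho <S_0$, whence the claimed bound with the same parameter $\rho$. There is no real obstacle here; all the analytic work — the Parseval identity for $\Vector{A}$ together with the $L^\infty$ bounds of Corollary~\ref{Cor1}, and the weighted summation over the exponentially decaying $L^1$ overlaps $\| \w_n \w_m \|_{L^1}$ for $\Vector{B}$ — has already been carried out in Lemmata~\ref{Lem4} and~\ref{Lem5}, so this statement is just the bookkeeping step that records their combination.
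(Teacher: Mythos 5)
Your proof is correct and is exactly what the paper intends: the paper introduces Lemma \ref{Lem6} with ``In conclusion we have proved the following Lemma,'' i.e.\ it is obtained precisely by combining the decomposition $r_{4,n}=A_n+B_n$ with the triangle inequality and the bounds of Lemmata \ref{Lem4} and \ref{Lem5}. No differences to report.
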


\section {Validity of the tight-binding approximation} \label {Sezione4}

First of all we need of the following estimate:

\begin {lemma} \label {Lem7}
Let us set 
\bee
\lambda := F \hhbar^{-1} + |\eta | \hhbar^{-3/2}  
\eee
and let $\Vector {c}$ and $\psi_\perp$ be the solutions to (\ref {Eq11}); then 
\be
\| \dot {\Vector {c}} \|_{\ell^2} \le \frac {C}{\hhbar} \max \left [ \beta ,\hhbar \lambda , \| \Vector {r} \|_{\ell^2} \right ]\, . \label {Eq15}
\ee
\end {lemma}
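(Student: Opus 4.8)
The plan is to read off $\dot{\Vector{c}}$ directly from the first line of the system (\ref{Eq11}) and estimate the $\ell^2$ norm of each of the four contributions on the right-hand side. Componentwise,
\[
i\hhbar\dot c_n = -\beta(c_{n+1}+c_{n-1}) + F\Chi_n c_n + \eta C_1 |c_n|^2 c_n + r_n,
\]
so that $\hhbar\|\dot{\Vector c}\|_{\ell^2}$ is bounded by the sum of the $\ell^2$ norms of the four sequences $\{\beta(c_{n+1}+c_{n-1})\}_n$, $\{F\Chi_n c_n\}_n$, $\{\eta C_1|c_n|^2 c_n\}_n$ and $\{r_n\}_n$. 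The first term contributes $\le 2\beta\|\Vector c\|_{\ell^2}\le 2\beta$, using $\|\Vector c\|_{\ell^2}\le 1$ (the a priori bound recorded just after (\ref{Eq9})). The last term is exactly $\|\Vector r\|_{\ell^2}$.

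For the perturbation term, $\Chi_n = \langle u_n, W u_n\rangle$ is bounded uniformly in $n$ and $\hhbar$ by $\|W\|_{L^\infty}$ since $\|u_n\|_{L^2}=1$; hence $\|\{F\Chi_n c_n\}\|_{\ell^2}\le C F \|\Vector c\|_{\ell^2}\le CF$. For the nonlinear term I would use $\|\{|c_n|^2 c_n\}\|_{\ell^2}\le \|\Vector c\|_{\ell^\infty}^2\|\Vector c\|_{\ell^2}\le \|\Vector c\|_{\ell^2}^3\le 1$, together with $C_1 = \|u_0\|_{L^4}^4\sim\hhbar^{-1/2}$ from (\ref{Eq10}); this gives $\|\{\eta C_1|c_n|^2c_n\}\|_{\ell^2}\le C|\eta|\hhbar^{-1/2}$. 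Collecting, $F$ and $|\eta|\hhbar^{-1/2}$ together are exactly $\hhbar\lambda = \hhbar(F\hhbar^{-1}+|\eta|\hhbar^{-3/2})$, so the middle two terms contribute $\le C\hhbar\lambda$.

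Combining the three bounds and absorbing numerical constants, $\hhbar\|\dot{\Vector c}\|_{\ell^2}\le C(\beta + \hhbar\lambda + \|\Vector r\|_{\ell^2})\le 3C\max[\beta,\hhbar\lambda,\|\Vector r\|_{\ell^2}]$, which is (\ref{Eq15}) after dividing by $\hhbar$. There is no real obstacle here: the only points requiring a little care are the uniform-in-$n$ bound on $\Chi_n$ (immediate from $\|u_n\|_{L^2}=1$ and $W\in L^\infty$), the elementary inequality $\|\{|c_n|^2c_n\}\|_{\ell^2}\le\|\Vector c\|_{\ell^2}^3$, and the fact that the precise value of $\|\Vector r\|_{\ell^2}$ is left implicit — it is kept as a free quantity on the right-hand side, to be estimated later using Lemmata 1, 2, 3 and 6, so this lemma deliberately does not unpack it.
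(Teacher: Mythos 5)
Your proposal is correct and follows essentially the same route as the paper: the paper also reads $\dot c_n$ off the first line of (\ref{Eq11}) and bounds the squared $\ell^2$ norm term by term using $\|\Vector{c}\|_{\ell^2}\le 1$, $|c_n|\le 1$, $|\Chi_n|\le C$ from $W\in L^\infty$, and $C_1\sim\hhbar^{-1/2}$, arriving at $\frac{C}{\hhbar^2}[(\beta^2+F^2+\eta^2C_1^2)\|\Vector{c}\|_{\ell^2}^2+\|\Vector{r}\|_{\ell^2}^2]$. The only cosmetic difference is that you apply the triangle inequality to the norms directly instead of to their squares.
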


\begin {proof}
Indeed, from (\ref {Eq11}) it immediately follows that 
\bee
\| \dot {\Vector {c}} \|_{\ell^2}^2 & = & \frac {1}{\hhbar^2} \sum_{n\in \Z} \left | - \beta (c_{n+1} + c_{n-1} ) + F \Chi_n c_n + \eta C_1 |c_n|^2 c_n + r_n \right |^2  \\ 
&\le & \frac {C}{\hhbar^2} \left [ \beta^2 \sum_{n\in \Z} |c_{n+1}|^2  + \beta^2 \sum_{n\in \Z} |c_{n-1}|^2  + F^2 \max_{n\in \Z} |\Chi_n |^2  \sum_{n\in \Z} |c_{n}|^2 + \right. \\
&& \left. \ \ + \eta^2 C_1^2 \sum_{n\in \Z} |c_{n}|^6 + \sum_{n\in \Z} |r_n|^2 \right ] \\ 
&\le & \frac {C}{\hhbar^2} \left [ (\beta^2 + F^2 + \eta^2 C_1^2)  \| {\Vector {c}} \|_{\ell^2}^2 +  \| {\Vector {r}} \|_{\ell^2}^2 \right ]
\eee
from which the estimate (\ref {Eq15}) follows since $\| {\Vector {c}} \|_{\ell^2} \le 1 $ and $|c_n| \le 1$, $|\xi_n| \le C$ because $W$ is a bounded 
potential and $C_1 \sim \hhbar^{-1/2}$. 
\end {proof}

Hereafter, we denote by $\omega$ a quantity, whose value may change from line to line, such that 
\bee
|\omega | \le C e^{-(S_0 + \zeta )/\hhbar}  
\eee
for some $\zeta >0$ and some $C>0$ independent of $\hhbar$.

\begin {theorem} \label {Teo2}
\be
\| \Vector {r} \|_{\ell^2} \le \omega + C \lambda e^{- (S_0 - \rho )/\hhbar} + C\hhbar \lambda  \| \psi_\perp \|_{L^2} \label {F1}
\ee
\end {theorem}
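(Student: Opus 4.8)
The plan is to split $\Vector{r}$ into its four constituents and feed each one into the estimates already established, then re-express the prefactors $F$ and $|\eta|$ in terms of $\lambda$. Starting from the definition $r_n = r_{1,n} + F r_{2,n} + F r_{3,n} + \eta r_{4,n}$, the triangle inequality in $\ell^2$ gives
\bee
\|\Vector{r}\|_{\ell^2} \le \|\Vector{r}_1\|_{\ell^2} + F\|\Vector{r}_2\|_{\ell^2} + F\|\Vector{r}_3\|_{\ell^2} + |\eta|\,\|\Vector{r}_4\|_{\ell^2}\,.
\eee
The only algebraic input needed is that, since $\hhbar<1$, the very definition $\lambda = F\hhbar^{-1} + |\eta|\hhbar^{-3/2}$ yields $F \le \hhbar\lambda \le \lambda$, $|\eta| \le \hhbar^{3/2}\lambda \le \lambda$ and, crucially for the $\psi_\perp$ term, $|\eta|\hhbar^{-1/2} \le \hhbar\lambda$ (all obtained by multiplying $\lambda$ by $\hhbar$ or $\hhbar^{3/2}$ and dropping a nonnegative summand).

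Then I would assemble the four pieces. Lemma~1 together with $\|\Vector{c}\|_{\ell^2}\le 1$ gives $\|\Vector{r}_1\|_{\ell^2} \le C e^{-(S_0+\zeta)/\hhbar} = \omega$. Lemma~\ref{Lem2} together with $\|\Vector{c}\|_{\ell^2}\le 1$ gives $F\|\Vector{r}_2\|_{\ell^2} \le C F e^{-(S_0-\rho)/\hhbar} \le C\lambda e^{-(S_0-\rho)/\hhbar}$. Lemma~\ref{Lem3} gives $F\|\Vector{r}_3\|_{\ell^2} \le C F \|\psi_\perp\|_{L^2} \le C\hhbar\lambda\|\psi_\perp\|_{L^2}$. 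Finally Lemma~\ref{Lem6} gives
\bee
|\eta|\,\|\Vector{r}_4\|_{\ell^2} \le C|\eta|\, e^{-(S_0-\rho)/\hhbar} + C|\eta|\hhbar^{-1/2}\|\psi_\perp\|_{L^2} \le C\lambda e^{-(S_0-\rho)/\hhbar} + C\hhbar\lambda\|\psi_\perp\|_{L^2}\,,
\eee
using $|\eta|\le\lambda$ in the first term and $|\eta|\hhbar^{-1/2}\le\hhbar\lambda$ in the second. Summing the four bounds, combining the two $\lambda e^{-(S_0-\rho)/\hhbar}$ terms and the two $\hhbar\lambda\|\psi_\perp\|_{L^2}$ terms, and absorbing numerical constants into a single $C$ (and noting $\|\Vector{r}_1\|_{\ell^2}$ is again of $\omega$-type), produces exactly \eqref{F1}.

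There is no genuinely hard analytic step here: the substance has been front-loaded into Lemmas~1--\ref{Lem6}, which carry the semiclassical/Agmon decay. The one point requiring care is the $\psi_\perp$-contribution of $\Vector{r}_4$: Lemma~\ref{Lem6} carries a factor $\hhbar^{-1/2}$ (inherited from $C_1\sim\hhbar^{-1/2}$ and from $\|u_n\|_{L^\infty}\sim\hhbar^{-1/4}$ used in Lemma~\ref{Lem5}), and one must verify that this is exactly compensated by $|\eta|\le\hhbar^{3/2}\lambda$, so that the net coefficient of $\|\psi_\perp\|_{L^2}$ is $\hhbar\lambda$ and not a larger power of $\hhbar$; similarly one should note that the polynomially growing prefactors in front of $e^{-(S_0-\rho)/\hhbar}$ are harmless because $\rho<S_0$ can be chosen so that $\hhbar^{-N}e^{-(S_0-\rho)/\hhbar}$ remains exponentially small as $\hhbar\to0^+$.
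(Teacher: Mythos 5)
Your proof is correct and follows essentially the same route as the paper: triangle inequality on $r_n = r_{1,n}+Fr_{2,n}+Fr_{3,n}+\eta r_{4,n}$, insertion of Lemmata 1, \ref{Lem2}, \ref{Lem3}, \ref{Lem6}, and absorption of $F$ and $|\eta|$ into $\lambda$ via $F\le\hhbar\lambda$ and $|\eta|\hhbar^{-1/2}\le\hhbar\lambda$. You merely make explicit the bookkeeping that the paper leaves to the phrase ``from which the statement immediately follows.''
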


\begin {proof}
Indeed, collecting the results from Lemmata \ref {Lem2},\ref {Lem3} and \ref {Lem6} and from Remark \ref {RemDue} we have that 
\bee
\| \Vector {r} \|_{\ell^2} &\le & \| \Vector {r_1} \|_{\ell^2} + F \, \| \Vector {r_2} \|_{\ell^2} + F\, \| \Vector {r_3} \|_{\ell^2} 
+|\eta |\, \| \Vector {r_4} \|_{\ell^2} \\ 
&\le & C e^{-(S_0 + \zeta)/\hhbar} + C F e^{-(S_0-\rho )/\hhbar} + C F  \| \psi_\perp \|_{L^2} + \\ 
&& \ \ + C |\eta | e^{-(S_0-\rho )/\hhbar} + C |\eta | \hhbar^{-1/2} \| \psi_\perp \|_{L^2} 
\eee
from which the statement immediately follows.
\end {proof}

Since $\psi_\perp (x,0) = \Pi_\perp \psi_0 =0$, then the second differential equation of the system (\ref {Eq11}) may be written as an integral equation of the 
Duhamel's form
\be
\psi_\perp (\tau ) = -i \int_0^\tau e^{-i(H_B-\Lambda_1 ) (\tau -\q )/\hhbar } \left [ \frac {F}{\hhbar} \Pi_\perp \W \psi 
+ \frac {\eta}{\hhbar} \Pi_\perp |\psi^{2}| \psi \right ] d\q  \label {Eq15Ter}
\ee

\begin {theorem} \label {Teo3}
We have the following estimate 
\be
\| \psi_\perp \|_{L^2} \le  \left \{ C \lambda + \tau C \hhbar^{-1} \lambda  \max \left [ \beta ,  \hhbar \lambda
\right ] \right \} e^{C \lambda  \tau}  \, , \ \forall \tau \in \R \, .  \label {F2}
\ee
\end {theorem}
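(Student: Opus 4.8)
The plan is to start from the Duhamel representation (\ref{Eq15Ter}) and to control $\|\psi_\perp\|_{L^2}$ by an integration-by-parts-in-time argument followed by Gronwall's Lemma. The free propagator $e^{-i(H_B-\Lambda_1)(\tau-\q )/\hhbar}$ commutes with $\Pi_\perp$ and is unitary, so it preserves the $L^2$-norm; estimating the integrand brutally would only give $\|\psi_\perp\|_{L^2}\le\tau C\lambda$, which is too weak because it does not reproduce the $\tau$-free leading term of (\ref{F2}). The key point is that on the range of $\Pi_\perp$ the generator $(H_B-\Lambda_1)$ is bounded below by a positive multiple of $\hhbar$: indeed $\Pi_\perp$ projects onto the spectral subspace of $H_B$ corresponding to energies $\ge E_2^b$, the single-well eigenvalue $\Lambda_1$ lies within an exponentially small distance of the first band $[E_1^b,E_1^t]$, and $E_2^b-E_1^t\ge\hhbar/C$ by (\ref{Eq6}); hence, for $\hhbar$ small, $(H_B-\Lambda_1)\Pi_\perp\ge c\hhbar\,\Pi_\perp$ and the operator $R:=[(H_B-\Lambda_1)\Pi_\perp]^{-1}$ satisfies $\|R\Pi_\perp\|\le C/\hhbar$.

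First I would split $\psi=\psi_1+\psi_\perp$ in the perturbative terms and write $|\psi|^2\psi=|\psi_1|^2\psi_1+g$, with $g$ as before Lemma \ref{Lem5}. The pieces $\frac{F}{\hhbar}\Pi_\perp\W\psi_\perp$ and $\frac{\eta}{\hhbar}\Pi_\perp g$ are estimated directly: since $\W$ is bounded, $\|g\|_{L^2}\le C\hhbar^{-1/2}\|\psi_\perp\|_{L^2}$ (as in the proof of Lemma \ref{Lem5}) and $F\le C\hhbar^2$, $|\eta|\le C\hhbar^2$ (Remark \ref{RemDue}), each of them has $L^2$-norm $\le C\lambda\|\psi_\perp\|_{L^2}$, and after integration in $\q $ they produce the Gronwall term $\int_0^\tau C\lambda\,\|\psi_\perp(\q )\|_{L^2}\,d\q $. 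The remaining pieces $\frac{F}{\hhbar}\Pi_\perp\W\psi_1$ and $\frac{\eta}{\hhbar}\Pi_\perp|\psi_1|^2\psi_1$ are handled by integrating by parts in $\q $, using $e^{-i(H_B-\Lambda_1)(\tau-\q )/\hhbar}=\frac{\hhbar}{i}R\,\frac{d}{d\q }e^{-i(H_B-\Lambda_1)(\tau-\q )/\hhbar}$ on the range of $\Pi_\perp$. Each integration by parts yields two boundary terms at $\q =\tau$ and $\q =0$, plus one remaining integral. Using $\|R\Pi_\perp\|\le C/\hhbar$, the bounds $\|\psi_1\|_{L^\infty}\le C\hhbar^{-1/4}$ from Corollary \ref{Cor1} (so $\|\,|\psi_1|^2\psi_1\|_{L^2}\le C\hhbar^{-1/2}$), and the fact that $\psi_\perp(0)=0$, hence $\psi_1(0)=\psi_0$, one checks that all the boundary terms are bounded by $CF\hhbar^{-1}+C|\eta|\hhbar^{-3/2}\le C\lambda$.

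The remaining integrals contain $\dot\psi_1$, resp. $\frac{d}{d\q }(|\psi_1|^2\psi_1)=2|\psi_1|^2\dot\psi_1+\psi_1^2\overline{\dot\psi_1}$. Since $\psi_1=\sum_n c_n u_n$ with $\{u_n\}$ orthonormal, $\|\dot\psi_1\|_{L^2}=\|\dot{\Vector c}\|_{\ell^2}$, and by Lemma \ref{Lem7} together with Theorem \ref{Teo2} — into which one inserts the crude bound $\|\psi_\perp\|_{L^2}\le1$ and uses that the exponentially small contributions to $\|\Vector r\|_{\ell^2}$ are dominated by $\hhbar\lambda$ for $\hhbar$ small — one gets $\|\dot{\Vector c}\|_{\ell^2}\le\frac{C}{\hhbar}\max[\beta,\hhbar\lambda]$. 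Combined with $\|\psi_1\|_{L^\infty}^2\le C\hhbar^{-1/2}$, this bounds the two integral terms by $\tau\,C\hhbar^{-1}\lambda\max[\beta,\hhbar\lambda]$. Collecting everything,
\[
\|\psi_\perp(\tau)\|_{L^2}\le C\lambda+\tau\,C\hhbar^{-1}\lambda\max[\beta,\hhbar\lambda]+\int_0^\tau C\lambda\,\|\psi_\perp(\q )\|_{L^2}\,d\q \, ,
\]
and, since $\alpha(\tau):=C\lambda+\tau C\hhbar^{-1}\lambda\max[\beta,\hhbar\lambda]$ is non-negative and non-decreasing while $\delta(\q ):=C\lambda$ is constant, Gronwall's Lemma gives (\ref{F2}) for $\tau\ge0$; the case $\tau<0$, $\tau\in\R$, follows by reversing time.

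I expect the main obstacle to be the integration-by-parts step. One must justify the time differentiability of $\psi_1$ (and of $|\psi_1|^2\psi_1$, which requires $\dot\psi_1\in L^\infty$, obtained from $\|\dot\psi_1\|_{L^\infty}\le\|\sum_{m\in\Z}|u_m|\|_{L^\infty}\|\dot{\Vector c}\|_{\ell^2}$), establish the spectral gap estimate $(H_B-\Lambda_1)\Pi_\perp\ge c\hhbar\,\Pi_\perp$ underlying $\|R\Pi_\perp\|\le C/\hhbar$, and — most delicately — close the estimate for $\|\dot{\Vector c}\|_{\ell^2}$ without circularity, which is precisely why only the trivial bound $\|\psi_\perp\|_{L^2}\le1$, and not the quantity being estimated, is fed into Theorem \ref{Teo2} at this stage.
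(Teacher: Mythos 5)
Your proposal follows essentially the same route as the paper: the same splitting of the Duhamel integrand into a $\psi_1$-part and a $\psi_\perp$-part, the same integration by parts in time using the resolvent $[H_B-\Lambda_1]^{-1}\Pi_\perp$ with norm $\asy(\hhbar^{-1})$ from the spectral gap, the same use of Lemma \ref{Lem7} and Theorem \ref{Teo2} with the crude bound $\| \psi_\perp \|_{L^2}\le 1$ to control $\| \dot{\Vector c}\|_{\ell^2}$, and Gronwall's Lemma to close. The argument is correct; your added justification of the gap estimate and of the non-circularity of the $\dot{\Vector c}$ bound only makes explicit what the paper leaves implicit.
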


\begin {proof} Let 
\bee
U_1 &:=& U_1 (\psi_1) = \left [ \frac {F}{\hhbar} \Pi_\perp \W \psi_1 + \frac {\eta}{\hhbar} \Pi_\perp |\psi_1^{2}| \psi_1 \right ] \\ 
U_2 &:=& U_2 (\psi_1 , \psi_\perp) = \left [ \frac {F}{\hhbar} \Pi_\perp \W \psi_\perp + \frac {\eta}{\hhbar} \Pi_\perp \left ( |\psi^2| \psi - 
|\psi_1^{2}| \psi_1 \right ) \right ]
\eee
then the previous equation (\ref {Eq15Ter}) becomes
\bee
\psi_\perp (\tau ) = f_1 (\tau ) + f_2 (\tau) 
\eee 
where 
\bee 
f_j (\tau ) = -i \int_0^\tau e^{-i (H_B-\Lambda_1 ) (\tau -\q )/\hhbar } U_j d\q  \, , \ j=1,2 \, , 
\eee
are such that

\begin {lemma}
The following estimates hold true:
\be
\left \| f_1 \right \|_{L^2} \le C \lambda + \tau C \hhbar^{-1} \lambda  \max \left [ \beta , \hhbar \lambda \right ]  \label {Eq13}
\ee
and
\be
\left \| f_2 \right \|_{L^2} \le C \lambda \int_0^\tau \| \psi_\perp (\q ) \|_{L^2} d\q  \, . \label {Eq14}
\ee
\end {lemma}

\begin {proof}
In order to prove the estimates (\ref {Eq13}) and (\ref {Eq14}) we remark that $e^{-i(H_B-\Lambda_1 )(\tau -s)/\hhbar }$ is an unitary operator; hence,  
\bee
\left \| e^{-i (H_B-\Lambda_1 ) (\tau -s)/\hhbar } U_j \right \|_{L^2} = \left \|  U_j \right \|_{L^2}\, , 
\ j =1,2\, . 
\eee
Now,
\bee
\| U_2 \|_{L^2} &\le & \frac {F}{\hhbar} \| \Pi_\perp W \psi_\perp \|_{L^2} + \frac {|\eta |}{\hhbar} \left \| \Pi_\perp \left ( 
|\psi |^2 \psi - |\psi_1 |^2 \psi_1 \right ) \right \|_{L^2} \\ 
&\le & C \hhbar^{-1} F  \| \psi_\perp \|_{L^2} + C \hhbar^{-1} |\eta| \left [ \| \psi_1 \psi_\perp^2 \|_{L^2} +  \| \psi_1^2 \psi_\perp \|_{L^2} +  
\| \psi_\perp^3 \|_{L^2} \right ] \\ 
&\le & C  \hhbar^{-1} F \| \psi_\perp \|_{L^2} + C \hhbar^{-1} |\eta| \left [ \| \psi_1 \|_{L^\infty} \| \psi_\perp \|_{L^\infty} +  
\| \psi_1 \|_{L^\infty}^2 +  \| \psi_\perp \|_{L^\infty}^2 \right ] \| \psi_\perp \|_{L^2} \\ 
&\le & C \left ( \hhbar^{-1}F + \hhbar^{-3/2} |\eta | \right )  \| \psi_\perp \|_{L^2} = C \lambda \| \psi_\perp \|_{L^2} 
\eee
from Theorem \ref {Teo1} and Corollary \ref {Cor1}; hence, (\ref {Eq14}) follows. \ In order to prove (\ref {Eq13}) we make use of an integration by parts:
\bee
f_1 (\tau ) &=& -i \int_0^\tau e^{-i (H_B-\Lambda_1 ) (\tau -\q )/\hhbar } U_1 d \q \\ 
&=& \left [ - \hhbar e^{-i (H_B-\Lambda_1 ) (\tau -\q )/\hhbar } [H_B-\Lambda_1 ]^{-1} U_1 \right ]_0^\tau +  
\hhbar \int_0^\tau e^{-i (H_B-\Lambda_1 ) (\tau -\q )/\hhbar }
[H_B-\Lambda_1 ]^{-1} \frac {d U_1}{d \q} d \q
\eee
From this fact and since $\| [H_B-\Lambda_1 ]^{-1} \Pi_\perp \| = \left [ \mbox {dist}(\Lambda_1, E_2^b) \right ]^{-1}
\sim \hhbar^{-1}$ then it follows that 
\bee
\| f_1 \|_{L^2} &\le & \max  \left [ \| U_1 (\tau ) \|_{L^2} , \| U_1 (0 ) \|_{L^2} \right ] + \tau \max_{s\in [0,\tau ] } 
\left \| \frac {dU_1}{d\tau } \right \|_{L^2} \\ 
&\le & C \lambda + \tau C \hhbar^{-1} \lambda  \max \left [ \beta , \hhbar \lambda, \| \Vector {r} \|_{\ell^2} \right ] \\ 
&\le & C \lambda + \tau C \hhbar^{-1} \lambda  \max \left [ \beta , \hhbar \lambda \right ] 
\eee
since 
\bee
\| U_1 \|_{L^2} &\le &  {F}{\hhbar^{-1}} \| \Pi_\perp W \psi_1 \|_{L^2} + {|\eta |}{\hhbar}^{-1} \| \Pi_\perp |\psi_1 |^2 \psi_1 \|_{L^2} \\ 
&\le & C  {F}{\hhbar^{-1}} \| \psi_1 \|_{L^2} + {|\eta |}{\hhbar}^{-1} \| \psi_1 \|_{L^\infty}^2 \| \psi_1 \|_{L^2} \\ 
&\le &C \lambda
\eee
and
\bee
\left \| \frac {dU_1}{d\tau } \right \|_{L^2} &\le & C \lambda \| \dot \psi_1 \|_{L^2} 
\le C \lambda \| \dot {\Vector {c}} \|_{\ell^2}  \\ 
& \le & C \hhbar^{-1} \lambda  \max \left [ \beta , \hhbar \lambda, \| \Vector {r} \|_{\ell^2} \right ]
\eee
by Lemma \ref {Lem7}, Theorem \ref {Teo2}, and from the draft estimate $\| \psi_\perp \|_{L^2} \le 1$.  \end {proof}

Hence, we have the following integral inequality
\bee
\| \psi_\perp \|_{L^2} \le  \left \{ C \lambda + \tau C \hhbar^{-1} \lambda  \max \left [ \beta ,  \hhbar \lambda \right ] 
\right \} + C \lambda \int_0^{\tau} \| \psi_\perp \|_{L^2} d\q 
\eee
and then the Gronwall's Lemma implies that
\bee
\| \psi_\perp \|_{L^2} \le  \left \{ C \lambda + \tau C \hhbar^{-1} \lambda  \max \left [ \beta , \hhbar 
\lambda \right ] \right \} e^{C \lambda  \tau}  \, , \ \forall \tau \in \R \, ;
\eee
Theorem \ref {Teo3} is so proved. 
\end {proof}

Now, we deal with the first differential equation of the system (\ref {Eq11})
\bee
\left \{
\begin {array}{l}
i\hhbar \dot c_n = G_n (\Vector {c}) + r_n  \, ,  \\ 
c_n (0) = \langle u_n , \psi_0 \rangle \, , \ 
\end {array}
\right.
\eee
where  
\bee G_n (\Vector {c}) = 
- \beta (c_{n+1} + c_{n-1}) + F \Chi_n c_n + \eta C_1 |c_n|^2 c_n \, . 
\eee
We compare it with the equation 
\be
\left \{
\begin {array}{l}
i\hhbar \dot g_n = G_n (\Vector {g})  \\ 
g_n (0) = c_n (0) 
\end {array}
\right. \label {Eq15Bis}
\ee
which represents the tight-binding approximation of (\ref {Eq3}), up to a phase factor $e^{-i\Lambda_1 \tau /\hhbar}$ depending on time. 

The Cauchy problem (\ref {Eq15Bis}) is globally well-posed, that is there exists a unique solution $\Vector {g} \in C (\R , \ell^2 (\Z ))$ that depends 
continuously on the the initial data (see, e.g. Theorem 1.3 \cite {P}). \ We must underline that we 
have the following a priori estimate $\| \Vector {c} \|_{\ell^2} \le 1$ and the conservation of the norm of $\Vector {g}$ 
\bee
\| \Vector {g} \|_{\ell^2} = \| \Vector {g}(0) \|_{\ell^2} = \| \Vector {c}(0) \|_{\ell^2} = 1 \, ; 
\eee
indeed, an immediate calculus gives that
\bee
\frac {d}{d\tau } i \hhbar \| \Vector {g} \|_{\ell^2}^2 = \beta \left [ \sum_{n\in \Z} \bar g_{n+1} g_n + \sum_{n\in \Z} \bar g_{n-1} g_n - 
\sum_{n\in \Z} g_{n+1} \bar g_n - \sum_{n\in \Z} g_{n-1} \bar g_n \right ] =0
\eee
because $\beta$, $F$, $\eta $, $C_1$ and $\xi_n$ are real-valued.

Then, it follows that the vector $\Vector {c}-\Vector {g}$ satisfies to the following integral equation
\bee
i \hhbar \left [c_n (\tau ) - g_n (\tau ) \right ] = \int_0^\tau \left [ G_n (\Vector {c}) - G_n (\Vector {g}) \right ] d\q  + \int_0^\tau r_n (\q ) d\q  \, , 
\eee
from which 
\bee
\| \Vector {c} - \Vector {g} \|_{\ell^2} \le \frac {1}{\hhbar} \int_0^\tau \| \Vector {G}(\Vector {c}) - \Vector {G}(\Vector {g}) \|_{\ell^2} d\q + \frac {1}{\hhbar} 
\int_0^\tau \| \Vector {r} \|_{\ell^2} d\q \, .
\eee

\begin {lemma}
$ \Vector {G}$ is a Lipschitz function such that 
\be
\| \Vector {G}(\Vector {c}) - \Vector {G}(\Vector {g}) \|_{\ell^2} \le C \max [\beta , \hhbar \lambda ] \|\Vector {c} - \Vector {g} \|_{\ell^2}\, . \label {F3}
\ee
\end {lemma}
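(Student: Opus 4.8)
The plan is to split $\Vector{G}(\Vector{c})-\Vector{G}(\Vector{g})$ into its three structural pieces --- the hopping term, the linear perturbative term, and the cubic term --- estimate each separately in $\ell^2(\Z)$, and then recombine everything using the definition $\lambda = F\hhbar^{-1} + |\eta|\hhbar^{-3/2}$ together with the asymptotics $C_1 \sim \hhbar^{-1/2}$ from (\ref{Eq10}). Explicitly, $G_n(\Vector{c})-G_n(\Vector{g}) = -\beta\left[(c_{n+1}-g_{n+1})+(c_{n-1}-g_{n-1})\right] + F\Chi_n(c_n-g_n) + \eta C_1\left(|c_n|^2 c_n - |g_n|^2 g_n\right)$.

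First I would handle the hopping term: since the left and right shift operators on $\ell^2(\Z)$ are isometries, $\left\|-\beta\left[(c_{\cdot+1}-g_{\cdot+1})+(c_{\cdot-1}-g_{\cdot-1})\right]\right\|_{\ell^2} \le 2\beta\,\|\Vector{c}-\Vector{g}\|_{\ell^2}$. For the linear perturbative term I would use Hypothesis 2: because $W \in L^\infty(\R)$ and $\{u_n\}_{n\in\Z}$ is orthonormal, $|\Chi_n| = |\langle u_n, Wu_n\rangle| \le \|W\|_{L^\infty}$ uniformly in $n$, so that piece contributes at most $CF\,\|\Vector{c}-\Vector{g}\|_{\ell^2}$.

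The one step needing a little care is the cubic term. Here I would invoke the a priori mass bounds $\|\Vector{c}\|_{\ell^2}\le 1$ and $\|\Vector{g}\|_{\ell^2}=1$, which give $|c_n|\le 1$ and $|g_n|\le 1$ for every $n$; combined with the elementary inequality $\left||a|^2a-|b|^2b\right| \le 3|a-b|$ valid for $|a|,|b|\le 1$ (seen by writing $|a|^2a-|b|^2b = |a|^2(a-b) + (|a|^2-|b|^2)b$ and bounding $||a|^2-|b|^2| \le 2|a-b|$), this yields $\sum_{n}\left||c_n|^2c_n-|g_n|^2g_n\right|^2 \le 9\,\|\Vector{c}-\Vector{g}\|_{\ell^2}^2$, hence a contribution $\le 3|\eta|C_1\,\|\Vector{c}-\Vector{g}\|_{\ell^2}$. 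Collecting the three bounds gives $\|\Vector{G}(\Vector{c})-\Vector{G}(\Vector{g})\|_{\ell^2} \le C\left(\beta + F + |\eta|C_1\right)\|\Vector{c}-\Vector{g}\|_{\ell^2}$.

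Finally I would simplify the prefactor. By (\ref{Eq10}) we have $C_1 \le C\hhbar^{-1/2}$, so $|\eta|C_1 \le C|\eta|\hhbar^{-1/2} \le C\hhbar\lambda$, while $F \le \hhbar\lambda$ is immediate from the very definition $\lambda = F\hhbar^{-1}+|\eta|\hhbar^{-3/2}$. Therefore $C(\beta + F + |\eta|C_1) \le C(\beta + \hhbar\lambda) \le C\max[\beta,\hhbar\lambda]$, which is exactly (\ref{F3}). Everything here is triangle-inequality bookkeeping except for the cubic estimate, where the essential input is the conservation-of-mass bound $|c_n|,|g_n|\le 1$; that is the only point where one cannot proceed by brute force alone.
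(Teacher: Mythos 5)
Your proposal is correct and follows essentially the same route as the paper: a term-by-term estimate of the hopping, linear, and cubic pieces, using $|\Chi_n|\le \|W\|_{L^\infty}$, the identity $|c_n|^2-|g_n|^2=c_n(\bar c_n-\bar g_n)+\bar g_n(c_n-g_n)$ together with $|c_n|,|g_n|\le 1$ for the cubic difference, and $C_1\sim\hhbar^{-1/2}$ to absorb $F+|\eta|C_1$ into $\hhbar\lambda$. The only cosmetic difference is that you work with the norms directly via the triangle inequality while the paper estimates the squared $\ell^2$ sum; the inputs and conclusion are identical.
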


\begin {proof}
Indeed
\bee
&& \| \Vector {G}(\Vector {c}) - \Vector {G}(\Vector {g}) \|_{\ell^2}^2 = \sum_{n\in \Z} \left | G_n (\Vector {c} ) - G_n (\Vector {g} ) \right |^2 = \\ 
&&\ =  \sum_{n\in \Z} \left | - \beta \left [ \left ( c_{n+1}-g_{n+1} \right ) + 
\left ( c_{n-1}-g_{n-1} \right ) \right ] + F \Chi_n (c_n - g_n ) + \right. \\ 
&& \left. \ \ + \eta C_1 \left [ |c_n|^2 c_n - |g_n|^2 g_n \right ] \right |^2 \le \\ 
&& \ \le C \left \{ \beta^2 \sum_{n\in \Z} \left | c_{n+1}-g_{n+1} \right |^2 +  \beta^2 \sum_{n\in \Z} \left | c_{n-1}-g_{n-1} \right |^2 + \sum_{n\in \Z} F^2 \Chi_n^2  
|c_n - g_n |^2 + \right. \\ 
&& \left. \ \ \ + \sum_{n\in \Z} \eta^2 C_1^2 \left | |c_n|^2 c_n - |g_n|^2 g_n \right | \right \} \\ 
&& \ \le C \left \{2 \beta^2 \left \| \Vector {c} - \Vector {g} \right \|_{\ell^2}^2 + \left [ \max_{n\in \Z} \Chi_n^2 \right ] F^2  \left \| \Vector {c} - \Vector {g} 
\right \|_{\ell^2}^2 + \right. \\ 
&& \left. \ \ \eta^2 C_1^2 \sum_{n\in \Z} \left | |c_n|^2 (c_n -g_n) + \left ( |c_n|^2 - |g_n|^2 \right ) |g_n| \right | \right \} \\ 
 && \ \le C \left [ \beta^2 + F^2 + \eta^2 C_1^2 \right ] \left \| \Vector {c} - \Vector {g} \right \|_{\ell^2}^2 
 \le C \max [\beta^2 , \hhbar^2 \lambda^2 ] \left \| \Vector {c} - \Vector {g} 
 \right \|_{\ell^2}^2
 \eee
since $ |c_n|^2-|g_n|^2 = c_n \left [ \bar c_n - \bar g_n \right ] + \bar g_n \left [ c_n - g_n \right ] $ and (\ref {Eq10}). \end {proof}

By Theorems \ref {Teo2} and \ref {Teo3}, it turns out that the vector $\Vector {r}$ is norm bounded by 
\be
 \| \Vector {r} \|_{\ell^2} \le a + b e^{C\lambda \tau } + c \tau e^{C\lambda \tau } \label {F4}
 \ee
for some positive constant $C$ independent of $\hhbar$ and where
\be
a &=& \omega + C \lambda e^{- (S_0 - \rho )/\hhbar} \, , \label {Eqa} \\
b &=& C \hhbar \lambda^2 \, , \label {Eqb} \\
c &=& C\lambda^2 \max \left [ \beta , \hhbar \lambda \right ] \, . \label {Eqc}
\ee
Then, we get the integral inequality
\be
\| \Vector {c} - \Vector {g} \|_{\ell^2} \le \alpha (\tau ) + \int_0^\tau \delta (\q ) \| \Vector {c} - \Vector {g} \|_{\ell^2}  d\q \label {F5}
\ee
where
\bee
\alpha (\tau ) \le  \frac {1}{\hhbar} \int_0^\tau \| \Vector {r} \|_{\ell^2} d\q \le C \left [ \frac {a\tau}{\hhbar} + \frac {C b\lambda + c}{C^2 \lambda^2 \hhbar} 
\left ( e^{C \lambda \tau}-1 \right ) + \frac {c\tau }{C \lambda \hhbar} e^{C\lambda \tau} \right ]
\eee
and
\bee
\delta (\tau )\le C \max [\hhbar^{-1} \beta , \lambda ]
\eee
By the Gronwall's Lemma we finally get the estimate
\bee
\| \Vector {c} - \Vector {g} \|_{\ell^2} &\le & \alpha (\tau ) e^{\int_0^\tau \delta (\q ) d\q  } = 
\alpha (\tau ) e^{C \max [\hhbar^{-1} \beta , \lambda ] \tau}
\eee

Therefore, we have proved that 
\begin {lemma} \label {Lem10}
Let $a$, $b$ and $c$ dafined by (\ref {Eqa}-\ref {Eqc}), then  Equazioni Differenziali della Fisica Matematica
\be
\| \Vector {c} - \Vector {g} \|_{\ell^2} \le C \left \{ \frac {a\tau}{\hhbar} + \frac {C b\lambda + c}{C^2 \lambda^2 \hhbar} \left ( e^{C \lambda \tau}-1 \right ) + 
\frac {c\tau }{C \lambda \hhbar} e^{C\lambda \tau} \right \} e^{C \max [\hhbar^{-1} \beta , \lambda ] \tau}
\label {F6}
\ee
for some positive constant $C$ independent of $\hhbar$.
\end {lemma}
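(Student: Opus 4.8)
The plan is to convert the comparison between the full lattice system (\ref{Eq11}) and its tight-binding truncation (\ref{Eq15Bis}) into a Gronwall-type integral inequality and then feed in the bounds on the remainder $\Vector{r}$ and on $\psi_\perp$ that have already been established. Subtracting the first line of (\ref{Eq15Bis}) from the first line of (\ref{Eq11}) and using $c_n(0)=g_n(0)$, one gets the integral identity
\bee
i\hhbar\left[c_n(\tau)-g_n(\tau)\right]=\int_0^\tau\left[G_n(\Vector{c})-G_n(\Vector{g})\right]d\q+\int_0^\tau r_n(\q)\,d\q\, .
\eee
Taking $\ell^2$ norms, using the triangle inequality in the form $\|\cdot\|_{\ell^2}\le\int_0^\tau\|\cdot\|_{\ell^2}$, and dividing by $\hhbar$ splits the right-hand side into a \emph{Lipschitz part} and a \emph{forcing part}.

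For the Lipschitz part I would invoke the Lipschitz estimate (\ref{F3}), which gives $\|\Vector{G}(\Vector{c})-\Vector{G}(\Vector{g})\|_{\ell^2}\le C\max[\beta,\hhbar\lambda]\,\|\Vector{c}-\Vector{g}\|_{\ell^2}$; after dividing by $\hhbar$ this produces the kernel $\delta(\q)=C\max[\hhbar^{-1}\beta,\lambda]$, a constant (hence monotone non-decreasing) function, with $\int_0^\tau\delta(\q)\,d\q=C\max[\hhbar^{-1}\beta,\lambda]\,\tau$. For the forcing part I would combine Theorem \ref{Teo2}, which bounds $\|\Vector{r}\|_{\ell^2}$ in terms of $\|\psi_\perp\|_{L^2}$, with Theorem \ref{Teo3}, which bounds $\|\psi_\perp\|_{L^2}$; substituting (\ref{F2}) into (\ref{F1}) yields the explicit majorant (\ref{F4}), $\|\Vector{r}\|_{\ell^2}\le a+be^{C\lambda\tau}+c\tau e^{C\lambda\tau}$ with $a,b,c$ as in (\ref{Eqa})--(\ref{Eqc}). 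Integrating this elementary expression from $0$ to $\tau$ and dividing by $\hhbar$ gives
\bee
\alpha(\tau)\le\frac{1}{\hhbar}\int_0^\tau\|\Vector{r}\|_{\ell^2}\,d\q\le C\left[\frac{a\tau}{\hhbar}+\frac{Cb\lambda+c}{C^2\lambda^2\hhbar}\left(e^{C\lambda\tau}-1\right)+\frac{c\tau}{C\lambda\hhbar}e^{C\lambda\tau}\right]\, ,
\eee
the only computation being the primitive of $(b+c\q)e^{C\lambda\q}$; the sign of the $c/(C^2\lambda^2)$ contribution is irrelevant since we only need an upper bound and $C$ is allowed to change from line to line.

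At this stage one is exactly in the hypotheses of the Gronwall's Lemma recalled in \S\ref{Sec2}: the inequality $\|\Vector{c}-\Vector{g}\|_{\ell^2}\le\alpha(\tau)+\int_0^\tau\delta(\q)\|\Vector{c}-\Vector{g}\|_{\ell^2}\,d\q$ holds with $\alpha$ and $\delta$ non-negative, and $\alpha$ is a sum of the monotone non-decreasing functions $\mathrm{const}\cdot\tau$, $\mathrm{const}\cdot(e^{C\lambda\tau}-1)$ and $\mathrm{const}\cdot\tau e^{C\lambda\tau}$. Applying Gronwall multiplies $\alpha(\tau)$ by $e^{\int_0^\tau\delta(\q)d\q}=e^{C\max[\hhbar^{-1}\beta,\lambda]\tau}$ and delivers precisely (\ref{F6}), which is the statement of the Lemma.

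I expect the only genuinely delicate point to be the bookkeeping in passing from Theorems \ref{Teo2}--\ref{Teo3} to (\ref{F4}): one has to track how the exponential factor $e^{C\lambda\tau}$ coming from the Gronwall estimate on $\|\psi_\perp\|_{L^2}$ is distributed among $a$, $b$, $c$ (the $e^{C\lambda\tau}$-free piece going into $a$, the piece carrying a factor $\hhbar\lambda^2$ into $b$, and the piece carrying an extra $\tau$ into $c$), and to verify that all the constants $C$ remain uniform in $\hhbar$ — which they do, since $\beta$, $\lambda$, $a$, $b$, $c$ are explicit functions of $\hhbar$ and the only non-explicit constants enter through the already-proved Lemmata. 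Everything else is the elementary integration displayed above and a single invocation of Gronwall's Lemma.
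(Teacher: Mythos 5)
Your proposal is correct and follows essentially the same route as the paper: the integral identity for $\Vector{c}-\Vector{g}$, the Lipschitz bound (\ref{F3}) giving the kernel $\delta$, the substitution of Theorem \ref{Teo3} into Theorem \ref{Teo2} to obtain the majorant (\ref{F4}), the elementary integration of $(a+b e^{C\lambda\q}+c\q e^{C\lambda\q})/\hhbar$ to get $\alpha(\tau)$, and a single application of Gronwall's Lemma. Your remark that the sign of the $c/(C^2\lambda^2)$ term from the integration by parts is harmless for an upper bound is exactly the bookkeeping implicit in the paper's displayed bound for $\alpha(\tau)$.
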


In conclusion

\begin {theorem} \label {Teo4}
Let $\Vector {g} \in C (\R , \ell^2 (\Z ))$ be the solution to the discrete nonlinear Schr\"odinger equation (\ref {Eq15Bis}); let $\psi (\tau , x)\in C (\R , H^1 (\R ))$ 
be the solution to the nonlinear Schr\"odinger equation (\ref {Eq3}) with initial condition $\psi_0 (x)= \sum g_n (0) \w_n (x)$; 
let $a$, $b$ and $c$ defined by Lemma \ref {Lem10}; let $\lambda$ be defined by 
Lemma \ref {Lem7}. \ Then, for some positive constant $C$ independent of $\hhbar$ it follows that 
\be
&& \left \| \psi (\tau , \cdot ) - \sum_{n\in \Z} g_n (\tau) e^{i\Lambda_1 \tau /\hhbar } u_n (\cdot ) \right \|_{L^2} \le \\ 
&& \ \ \le \left \{ C \lambda + \tau C \hhbar^{-1} \lambda  \max \left [ \beta , \hhbar \lambda \right ] \right \} 
e^{C \lambda  \tau} + \nonumber \\ 
&& \ + C \left \{ \frac {a\tau}{\hhbar} + \frac {C b\lambda + c}{C^2 \lambda^2 \hhbar} \left ( e^{C \lambda \tau}-1 \right ) + 
\frac {c\tau }{C \lambda \hhbar} e^{C\lambda \tau} \right \} e^{C \max [\hhbar^{-1} \beta , \lambda ] \tau} \, , \ \forall \tau \in \R^+ \, .  \label {F9}
\ee
\end {theorem}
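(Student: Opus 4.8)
The plan is to show that the left-hand side of (\ref{F9}) is dominated by the sum of two quantities already controlled in \S \ref{Sezione4}, namely $\| \Vector{c} - \Vector{g} \|_{\ell^2}$ and $\| \psi_\perp \|_{L^2}$, and then simply to insert the available bounds (\ref{F6}) and (\ref{F2}). First I would pass to the gauged picture: the time-dependent phase factor relating (\ref{Eq3}) to the system (\ref{Eq11}) is a unitary multiplication operator on $L^2 (\R )$, so, writing $\psi = \psi_1 + \psi_\perp$ with $\psi_1 = \Pi \psi = \sum_{n} c_n u_n$ and $\psi_\perp = \Pi_\perp \psi$ for the components appearing in (\ref{Eq11}), the quantity to be estimated equals $\| \psi_1 + \psi_\perp - \sum_n g_n u_n \|_{L^2}$. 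Since $\{ u_n \}_{n\in \Z}$ is an orthonormal base of $\Pi [L^2 (\R )]$ and $\psi_\perp \in \Pi_\perp [L^2 (\R )]$, the triangle inequality and Parseval's identity give
\bee
\left \| \psi_1 + \psi_\perp - \sum_{n\in \Z} g_n u_n \right \|_{L^2} &\le & \left \| \sum_{n\in \Z} (c_n - g_n ) u_n \right \|_{L^2} + \| \psi_\perp \|_{L^2} \\
&=& \| \Vector{c} - \Vector{g} \|_{\ell^2} + \| \psi_\perp \|_{L^2} \, .
\eee

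Before applying the two estimates I would check that we are in the setting under which they were proved. The prescribed initial datum $\psi_0 = \sum_n g_n (0) u_n$ lies in $\Pi [L^2 (\R )]$, so Hypothesis 4 holds, $\psi_\perp (0) = 0$ — which is precisely what makes the Duhamel representation (\ref{Eq15Ter}), and hence Theorem \ref{Teo3}, available — and $c_n (0) = \langle u_n , \psi_0 \rangle = g_n (0)$, so the Cauchy problems (\ref{Eq11}) and (\ref{Eq15Bis}) start from the same data, as required in Lemma \ref{Lem10}. It then remains only to substitute: Theorem \ref{Teo3}, i.e. (\ref{F2}), bounds $\| \psi_\perp \|_{L^2}$ by the first line of the right-hand side of (\ref{F9}), while Lemma \ref{Lem10}, i.e. (\ref{F6}), bounds $\| \Vector{c} - \Vector{g} \|_{\ell^2}$ by the second and third lines; adding the two inequalities produces (\ref{F9}) for every $\tau \in \R^+$.

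I do not expect any genuine obstacle at this last stage: Theorem \ref{Teo4} is the bookkeeping combination of the preceding work — the a priori bounds of Theorem \ref{Teo1} and Corollary \ref{Cor1}, the $\ell^2$-estimates on the remainder vectors $\Vector{r}_1, \dots, \Vector{r}_4$ of \S \ref{Sec3} (summarised in Theorem \ref{Teo2}), the Duhamel plus Gronwall argument behind (\ref{F2}), and the Lipschitz estimate (\ref{F3}) together with the second Gronwall argument behind (\ref{F6}). The one point that needs a little care is the consistent handling of the time-dependent gauge phase, so that it is genuinely factored out (being unitary) before the triangle inequality is applied; once that is done the proof of Theorem \ref{Teo4} is a two-line assembly, and Corollaries \ref{Cor1Bis} and \ref{Cor2} then follow by specialising $F$, $\eta$ (hence $\lambda$) and $\beta$ according to Hypothesis 3a) or 3b) and choosing the time window accordingly.
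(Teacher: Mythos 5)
Your proposal is correct and follows essentially the same route as the paper: decompose the gauged error into the off-band part $\| \psi_\perp \|_{L^2}$ and the in-band part $\| \Vector{c}-\Vector{g} \|_{\ell^2}$ (the paper uses orthogonality of the two components to get an exact Pythagorean identity where you use the triangle inequality, an immaterial difference), then insert the bounds (\ref{F2}) and (\ref{F6}). Your additional checks on the initial data ($\psi_\perp(0)=0$ and $c_n(0)=g_n(0)$) are consistent with, though left implicit in, the paper's argument.
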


\begin {proof}
Indeed, recalling that we made use of the gauge choice $\psi \to e^{i \Lambda_1 \tau /\hhbar} \psi$, we have that 
\bee
\left \| \psi  - \sum_{n\in \Z} g_n  e^{i\Lambda_1 \tau /\hhbar } \w_n  \right \|_{L^2}^2 &=& 
\left \| e^{-i\Lambda_1 \tau /\hhbar } \psi  - \sum_{n\in \Z} g_n   \w_n  \right \|_{L^2}^2 \\ 
&=& \| \psi_\perp \|_{L^2}^2 + \left \| \sum_{n\in \Z} \left ( c_n - g_n  
 \right ) \w_n \right \|_{L^2}^2
\eee
where 
\bee
\left \| \sum_{n\in \Z} \left ( c_n - g_n  
 \right ) \w_n \right \|_{L^2}^2 = \sum_{n\in Z} |c_n-g_n|^2 \| \w_n \|_{L^2}^2 = \| \Vector {c} - \Vector {g} \|_{\ell^2}^2 
 \eee
 because $\{ \w_n \}$ is an orthonormal set of vectors.
\end {proof}

\subsection {Proof of Corollary \ref {Cor1Bis}}\label {Sec4.1}

Here we assume, according with {Hypothesis 3a)}, that the real-valued parameters $\alpha_1 $ and $\alpha_2$ are fixed; in such a case we have that 
\be
F \sim \eta \sim \hhbar^2 \, . \label {Eq4}
\ee
Therefore:
\bee
\lambda \sim \hhbar^{1/2} \, , \ |a|, \beta \le Ce^{-(S_0-\rho )/\hhbar} \, , \  b \sim \hhbar^{2} \, , \ c \sim \hhbar^{5/2} \, . 
\eee
Then the estimate (\ref {F9}) makes sense for times of order $\tau \in [0, C \hhbar^{-\gamma} ]$ for some fixed $\gamma \le \frac 12$. \ In such an interval 
we have that 
\bee
 C \lambda + \tau C \hhbar^{-1} \lambda  \max \left [ \beta ,  \hhbar \lambda \right ] \sim \hhbar^{1/2} + 
 \hhbar^{1 - \gamma}  \\ 
 \frac {a\tau}{\hhbar} + \frac {C b\lambda + c}{C^2 \lambda^2 \hhbar} \left ( e^{C \lambda \tau}-1 \right ) + 
\frac {c\tau }{C \lambda \hhbar} e^{C\lambda \tau} \sim \hhbar^{1/2} + \hhbar^{1-\gamma}\, . 
\eee

In particular, for $\gamma = \frac 12$ Corollary \ref {Cor1Bis} follows.

\subsection {Proof of Corollary \ref {Cor2}}\label {Sec4.2}

Here we assume, according with {Hypothesis 3b)}, that the real-valued parameters $\alpha_1 $ and $\alpha_2$ are not fixed, but both go to zero when $\epsilon $ 
goes to zero; in particular we have that
\bee
F \sim \hhbar^{-1/2} \eta \sim \beta \, . 
\eee
In such a case we have that
\bee
\lambda \sim \hhbar^{-1} \beta \, , \ a \sim \omega \, , \ b \sim  \hhbar^{-1} \beta^2 \, , \ c \sim  \hhbar^{-2} \beta^3 \, . 
\eee
The estimate (\ref {F9}) makes sense for times of order $\tau \in [0, \beta^{-1} \hhbar ]$. \ In such an interval we have that 
\bee
\| \psi_\perp \|_{L^2} \le C e^{-(S_0 - \rho )/\hhbar} 
\eee
and 
\bee
\| \Vector {c} - \Vector {g} \|_{\ell^2} \le C e^{-\zeta /\hhbar}  
\eee
for some $\zeta >0$. \ Hence, Corollary \ref {Cor2} is proved.

\end {document}